\documentclass[11pt]{myclass}
\usepackage{euscript}

\usepackage{amssymb}
\usepackage{epsfig}
\usepackage{xspace}
\usepackage{color}

\newcommand{\denselist}{\vspace{-5pt} \itemsep -2pt\parsep=-1pt\partopsep -2pt}

\newcommand{\eps}{\varepsilon}

\newcommand{\vol}{\textsf{vol}}
\newcommand{\rad}{\textsf{rad}}

\DeclareMathOperator{\kde}{\textsc{kde}}
\DeclareMathOperator{\poly}{\textsf{poly}}

\newcommand{\prob}[1]{\ensuremath{\textbf{{\sffamily Pr}}\hspace{-.8mm}\left[#1\right]}}


\newcommand{\Reals}{\ensuremath{\mathbb{R}}}
\newcommand{\Eu}[1]{\ensuremath{\EuScript{#1}}}

\usepackage{picins}

\newlength{\ppicwd}



\title{$\eps$-Samples for Kernels}
\author{Jeff M. Phillips \\ University of Utah \\ \texttt{jeffp@cs.utah.edu}}

\begin{document}
\begin{titlepage}
\maketitle

\begin{center} \today \end{center}

\begin{abstract}
We study the worst case error of kernel density estimates via subset approximation.  A kernel density estimate of a distribution is the convolution of that distribution with a fixed kernel (e.g. Gaussian kernel).  Given a subset (i.e. a point set) of the input distribution, we can compare the kernel density estimates of the input distribution with that of the subset and bound the worst case error.  If the maximum error is $\eps$, then this subset can be thought of as an $\eps$-sample (aka an $\eps$-approximation) of the range space defined with the input distribution as the ground set and the fixed kernel representing the family of ranges.  Interestingly, in this case the ranges are not binary, but have a continuous range (for simplicity we focus on kernels with range of $[0,1]$); these allow for smoother notions of range spaces.  

It turns out, the use of this smoother family of range spaces has an added benefit of greatly decreasing the size required for $\eps$-samples.  For instance, in the plane the size is $O((1/\eps^{4/3}) \log^{2/3}(1/\eps))$ for disks (based on VC-dimension arguments) but is only $O((1/\eps) \sqrt{\log (1/\eps)})$ for Gaussian kernels and for kernels with bounded slope that only affect a bounded domain.  
These bounds are accomplished by studying the discrepancy of these ``kernel'' range spaces, and here the improvement in bounds are even more pronounced.  In the plane, we show the discrepancy is $O(\sqrt{\log n})$ for these kernels, whereas for balls there is a lower bound of $\Omega(n^{1/4})$.  
\end{abstract}
\end{titlepage}

\section{Introduction}

We study the $L_\infty$ error in kernel density estimates of points sets by a kernel density estimate of their subset.  Formally, we start with a size $n$ point set $P \subset \Reals^d$ and a kernel $K : \Reals^d \times \Reals^d \to \Reals$.  Then a kernel density estimate $\kde_P$ of a point set $P$ is a convolution of that point set with a kernel, defined at any point $x \in \Reals^d$:
\[
 \kde_P(x) = \sum_{p \in P} \frac{K(x,p)}{|P|}.
\]
%
The goal is to construct a subset $S \subset P$, and bound its size, so that it has $\eps$-bounded $L_\infty$ error, i.e. 
\[
L_\infty\left(\kde_P,\kde_S\right) = \max_{x \in \Reals^d} \left| \kde_P(x) - \kde_S(x) \right| \leq \eps.
\]
We call such a subset $S$ an \emph{$\eps$-sample of a kernel range space $(P,\Eu{K})$}, where $\Eu{K}$ is the set of all functions $K(x,\cdot)$ represented by a fixed kernel $K$ and an arbitrary center point $x \in \Reals^d$.
Our main result is the construction in $\Reals^2$ of an $\eps$-sample of size $O((1/\eps) \sqrt{\log (1/\eps)})$ for a broad variety of kernel range spaces.  

We will study this result through the perspective of three types of kernels.  We use as examples the ball kernels $\Eu{B}$, the triangle kernels $\Eu{T}$, and the Gaussian kernels $\Eu{G}$; we normalize all kernels so $K(p,p) =1$.
\makebox[\linewidth][r]{\includegraphics[width=.45\linewidth]{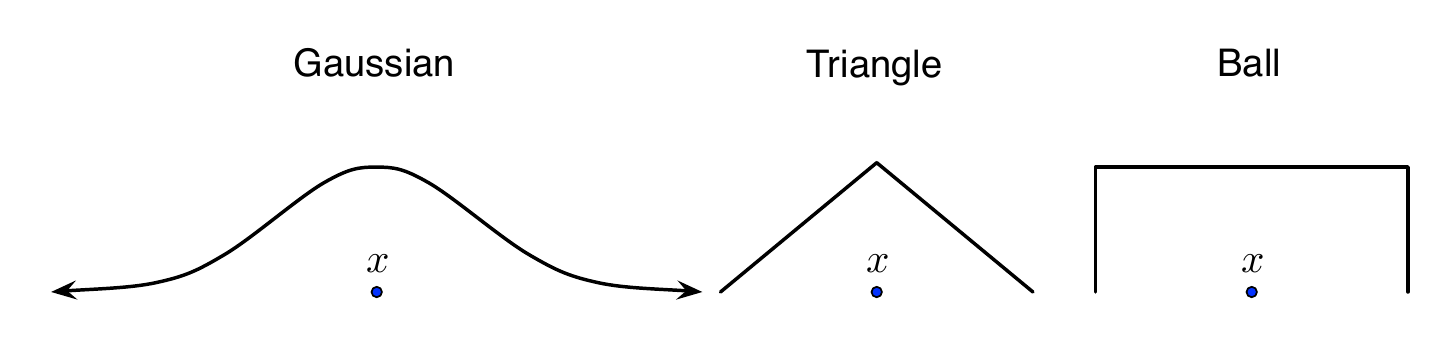}}
\vspace{-.93in}
\begin{itemize}\denselist
\item For $K \in \Eu{B}$ : $K(x,p) = \{1$ if  $\|x-p\| \leq 1$ and $0$ otherwise\}.
\item For $K \in \Eu{T}$ : $K(x,p) = \max \{0, 1- \|x-p\|\}.$
\item For $K \in \Eu{G}$ : $K(x,p) = \exp(-\|x-p\|^2).$
\end{itemize}
\vspace{-5pt}
Our main result holds for $\Eu{T}$ and $\Eu{G}$, but not $\Eu{B}$.  However, in the context of combinatorial geometry, kernels related to $\Eu{B}$ (binary ranges) seem to have been studied the most heavily from an $L_\infty$ error perspective, and require larger $\eps$-samples.  In Appendix \ref{sec:lb} we show a lower-bound that such a result cannot hold for $\Eu{B}$.  

We re-describe this result next by adapting (binary) range spaces and discrepancy; these same notions will be used to prove our result.    

\paragraph{Range spaces.}
A kernel range space is an extension of the combinatorial concept of a range space.  
Let $P \subset \Reals^d$ be a set of $n$ points.  Let $\Eu{A} \subset 2^P$ be the set of subsets of $P$, for instance when $\Eu{A} = \Eu{B}$ they are defined by containment in a ball. The pair $(P,\Eu{A})$ is called a \emph{range space}.  

Thus we can re-imagine a kernel range space $(P,\Eu{K})$ as the family of \emph{fractional} subsets of $P$, that is, each $p \in P$ does not need to be completely in ($1$) or not in ($0$) a range, but can be fractionally in a range described by a value in $[0,1]$.  In the case of the ball kernel $K(x,\cdot) \in \Eu{B}$ we say the associated range space is a \emph{binary range space} since all points have a binary value associated with each range, corresponding with \emph{in} or \emph{not in}.  

\paragraph{Colorings and discrepancy.}
Let $\chi : P \to \{-1,+1\}$ be a \emph{coloring} of $P$.  The \emph{combinatorial discrepancy} of $(P,\Eu{A})$, given a coloring $\chi$ is defined $d_\chi(P,\Eu{A}) = \max_{R \in \Eu{A}} |\sum_{p \in R} \chi(p)|$.  
For a kernel range space $(P,\Eu{K})$, this is generalized as the \emph{kernel discrepancy}, defined $d_\chi(P,\Eu{K}) = \max_{x \in \Reals^d} \sum_{p \in P} \chi(p) K(x,p)$; we can also write $d_\chi(P,K_x) = \sum_{p \in P} \chi(p) K(x,p)$ for a specific kernel $K_x$, often the subscript $x$ is dropped when it is apparent.  
Then the \emph{minimum kernel discrepancy} of a kernel range space is defined $d(P,\Eu{K}) = \min_\chi d_\chi(P,\Eu{K})$.   
See Matou\'{s}ek's~\cite{Mat99} and Chazelle's~\cite{Cha01} books for a masterful treatments of this field when restricted to combinatorial discrepancy.

\paragraph{Constructing $\eps$-samples.}
Given a (binary) range space $(P,\Eu{A})$ an \emph{$\eps$-sample} (a.k.a.  an $\eps$-approximation) is a subset $S \subset P$ such that the density of $P$ is approximated with respect to $\Eu{A}$ so
\[
\max_{R \in \Eu{A}} \left| \frac{|R \cap P|}{|P|} - \frac{|R \cap S|}{|S|} \right| \leq \eps.
\]
Clearly, an $\eps$-sample of a kernel range space is a direct generalization of the above defined $\eps$-sample for (binary) range space.  In fact, recently Joshi \etal~\cite{JKPV11} showed that for any kernel range space $(P,\Eu{K})$ where all super-level sets of kernels are described by elements of a binary range space $(P,\Eu{A})$, then an $\eps$-sample of $(P,\Eu{A})$ is also an $\eps$-sample of $(P,\Eu{K})$.  For instance, super-level sets of $\Eu{G}, \Eu{T}$ are balls in $\Eu{B}$.  

$\eps$-Samples are a very common and powerful coreset for approximating $P$; the set $S$ can be used as proxy for $P$ in many diverse applications (c.f. \cite{AAB08,Phi08,CCD11,RACZU11}).  
For binary range spaces with constant VC-dimension~\cite{VC71} a random sample $S$ of size $O((1/\eps^2) \log (1/\delta))$ provides an $\eps$-sample with probability at least $1-\delta$~\cite{LLS01}.  Better bounds can be achieved through deterministic approaches as outlined by Chazelle and Matousek~\cite{CM96}, or see either of their books for more details~\cite{Cha01,Mat99}.  This approach is based on the following rough idea.  Construct a low discrepancy coloring $\chi$ of $P$, and remove all points $p \in P$ such that $\chi(p) = -1$.  Then repeat these color-remove steps until only a small number of points are left (that are always colored $+1$) and not too much error has accrued.  
As such, the best bounds for the size of $\eps$-samples are tied directly to discrepancy.  As spelled out explicitly by Phillips~\cite{Phi08} (see also \cite{Mat99,Cha01} for more classic references), for a range space $(P,\Eu{A})$ with discrepancy $O(\log^\tau |P|)$  (resp. $O(|P|^\psi \log^\tau |P|)$) that can be constructed in time $O(|P|^w \log^\phi(|P|))$, there is an $\eps$-sample of size $g(\eps) = O((1/\eps) \log^\tau (1/\eps))$ (resp. $O(((1/\eps)\log^\tau(1/\eps))^{1/(1-\psi)})$) that can be constructed in time $O(w^{w-1} n \cdot (g(\eps))^{w-1} \cdot \log^\phi(g(\eps)) + g(\eps))$.  
Although, originally intended for binary range spaces, these results hold directly for kernel range spaces.

\subsection{Our Results}
Our main structural result is an algorithm for constructing a low-discrepancy coloring $\chi$ of a kernel range space.  The algorithm is relatively simple; we construct a min-cost matching of the points (minimizes sum of distances), and for each pair of points in the matching we color one point $+1$ and the other $-1$ at random.  
\begin{theorem}
For $P \subset \Reals^d$ of size $n$, the above coloring $\chi$, has discrepancy 
$d_\chi(P,\Eu{T}) = O(n^{1/2-1/d}\sqrt{\log (n/\delta)})$ and 
$d_\chi(P,\Eu{G}) = O(n^{1/2-1/d}\sqrt{\log (n/\delta)})$ with probability at least $1-\delta$.  
\label{thm:disc-main}
\end{theorem}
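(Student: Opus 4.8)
The plan is to fix a kernel $K_x:=K(x,\cdot)$, use the pairwise structure of $\chi$ to write $d_\chi(P,K_x)$ as a sum of independent bounded variables, apply a Chernoff--Hoeffding bound for each fixed $x$, and close with a union bound over a sufficiently fine net of centers. Since $\chi$ colours one endpoint $+1$ and the other $-1$ for each edge of the min-cost matching $M$, independently across edges, the signed sum splits pairwise:
\[
 d_\chi(P,K_x)\;=\;\sum_{p\in P}\chi(p)K_x(p)\;=\;\sum_{(p,q)\in M}\sigma_{pq}\bigl(K_x(p)-K_x(q)\bigr),
\]
with the $\sigma_{pq}\in\{-1,+1\}$ independent and uniform. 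This is a sum of independent, mean-zero, bounded terms, so Hoeffding gives $\prob{|d_\chi(P,K_x)|>t}\le 2\exp\!\bigl(-t^2/(2V_x)\bigr)$ with $V_x:=\sum_{(p,q)\in M}\bigl(K_x(p)-K_x(q)\bigr)^2$. The theorem then reduces to the purely deterministic bound $\max_x V_x=O(n^{1-2/d})$ --- the matching $M$ being fixed before the randomness --- followed by the union bound.

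For the variance bound I would combine two facts. First, $\Eu{T}$ and $\Eu{G}$ are $1$-Lipschitz and take values in $[0,1]$ (for $\Eu{G}$, $\sup_r|\frac{d}{dr}e^{-r^2}|=\sup_r 2re^{-r^2}=\sqrt{2/e}<1$), so $|K_x(p)-K_x(q)|\le\min\{1,\|p-q\|\}$; moreover this difference is forced to be tiny unless the edge $(p,q)$ lies near $x$ --- for $\Eu{T}$ it vanishes unless an endpoint lies in $B(x,1)$, and for $\Eu{G}$ it is at most $2e^{-\rho^2}$ when the nearer endpoint is at distance $\rho$ from $x$. Second, I would use the geometry of a min-cost Euclidean matching: by the exchange inequalities forced by optimality of $M$ ($\|p-q\|+\|p'-q'\|\le\|p-p'\|+\|q-q'\|$, and its mate from the other re-pairing) the midpoints of any two edges of length $\ge\ell$ are $\Omega(\ell)$-separated, so only $O((r/\ell)^d)$ edges of length $\ge\ell$ can meet a ball of radius $r$; and, as a standard consequence, a min-cost matching of $m$ points confined to a region of volume $O(1)$ satisfies $\sum_{(p,q)}\|p-q\|^d=O(1)$ (a Beardwood--Halton--Hammersley / Steele-type estimate), hence $\sum_{(p,q)}\|p-q\|^2=O(m^{1-2/d})$ by H\"older. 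Combining: for $\Eu{T}$, the edges of $M$ contained in $B(x,2)$ form a min-cost matching of at most $n$ points in an $O(1)$-volume region, so their squared lengths sum to $O(n^{1-2/d})$; the remaining edges that meet $B(x,1)$ are forced to be long, hence by the packing bound too few to contribute more than $O(n^{1-2/d})$ even with $|K_x(p)-K_x(q)|\le 1$ each. For $\Eu{G}$ one runs the same estimate annulus by annulus around $x$: the shell at distance $\rho\in[j,j+1)$ contributes at most $\min\{4e^{-2j^2},1\}$ times the number $m_j$ of edges meeting it, balanced against the truncated bound $O((j{+}1)^2 m_j^{1-2/d})$, and since $\sum_j(j{+}1)^2 e^{-2j^2}<\infty$ the total is again $O(n^{1-2/d})$.

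To pass from a fixed $x$ to all of $\Reals^d$, note $|d_\chi(P,K_x)-d_\chi(P,K_{x'})|\le\sum_p|K_x(p)-K_{x'}(p)|\le n\|x-x'\|$, and $|d_\chi(P,K_x)|<1$ once $x$ is farther than $1$ (resp.\ $\sqrt{2\log n}$ for $\Eu{G}$) from every point of $P$; so a grid net $N$ of spacing $1/n^2$ over the $O(1)$- (resp.\ $O(\sqrt{\log n})$-) neighbourhood of $P$ has $|N|=n^{O(d)}$ and $\max_x|d_\chi(P,K_x)|\le\max_{x\in N}|d_\chi(P,K_x)|+1$. A union bound then gives $\prob{\max_{x\in N}|d_\chi(P,K_x)|>t}\le|N|\cdot 2\exp\!\bigl(-t^2/(2\max_x V_x)\bigr)$, and taking $t=\Theta\!\bigl(\sqrt{\max_x V_x\cdot\log(|N|/\delta)}\bigr)=\Theta\!\bigl(n^{1/2-1/d}\sqrt{\log(n/\delta)}\bigr)$ makes this at most $\delta$; if $n$ is odd, the lone unmatched point is coloured arbitrarily and adds at most $1$. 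The main obstacle is the variance estimate of the middle paragraph: the Lipschitz and boundedness properties of the kernel alone give only the trivial $V_x=O(n)$, so optimality of the matching must be used quantitatively --- both through the midpoint-separation/packing consequence and through the $\sum\|p-q\|^d=O(\vol)$ bound it yields --- and care is needed for the edges that escape the kernel's effective support (for $\Eu{T}$) or that live in far annuli (for $\Eu{G}$).
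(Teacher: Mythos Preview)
Your proposal is correct and follows essentially the same route as the paper: Hoeffding on the matched pairs, the variance bound $\sum_j\Delta_j^2=O(n^{1-2/d})$ via the min-cost matching geometry (the paper obtains $\sum_{(p,q)\in M^*}\|p-q\|^d=O(1)$ inside a unit ball by invoking Bern--Eppstein's football argument rather than your midpoint-packing sketch, then applies Jensen exactly as you apply H\"older), an annulus decomposition for $\Eu{G}$, and a $\poly(n)$-size net of centers for the union bound. The only cosmetic differences are that the paper slices the Gaussian at kernel values $1/2^i$ instead of at integer radii, and it packages the boundary-crossing edges into a single quantity $\rho(B,M^*)$ handled inside its Lemma~\ref{lem:match-length} rather than via a separate long-edge packing count.
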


This implies an efficient algorithm for constructing small $\eps$-samples of kernel range spaces.  
\begin{theorem}
For $P \subset \Reals^d$, with probability at least $1-\delta$, we can construct in $O(n/\eps^2)$ time an $\eps$-sample of $(P,\Eu{T})$ or $(P,\Eu{G})$ of size $O((1/\eps)^{2d/(d+2)}  \log^{d/(d+2)}(1/\eps \delta))$.
\label{thm:d-main}
\end{theorem}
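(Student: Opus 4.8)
The plan is to feed the randomized low-discrepancy coloring of Theorem~\ref{thm:disc-main} into the classical discrepancy-to-$\eps$-sample reduction \cite{CM96,Mat99,Cha01,Phi08}; the only genuinely new point is a union bound that controls the coloring's failure probability across all of the roughly $\log(1/\eps)$ halving rounds. It suffices to treat $\Eu{T}$ (the argument for $\Eu{G}$ is identical, Theorem~\ref{thm:disc-main} supplying the same bound), and we may assume $n$ exceeds the target size, else output $S=P$. First I would shrink the instance: the super-level sets of every triangle kernel are balls, the range space $(P,\Eu{B})$ has VC-dimension $d+1$, so by the random-sampling bound \cite{VC71,LLS01} a uniform subset $Q\subseteq P$ with $|Q|=m=O((1/\eps^2)\log(1/\delta))$ is, with probability at least $1-\delta/2$, an $(\eps/2)$-sample of $(P,\Eu{B})$, hence, by the transfer theorem of Joshi et al.~\cite{JKPV11}, of $(P,\Eu{T})$ as well. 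By the triangle inequality it is then enough to build, with probability at least $1-\delta/2$, an $(\eps/2)$-sample of $(Q,\Eu{T})$ of the claimed size.

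Next I would iterate halving on $Q$. Set $Q_0=Q$, $n_i=|Q_i|=\lceil m/2^i\rceil$; given $Q_i$, invoke Theorem~\ref{thm:disc-main} with failure probability $\delta_i=\delta/(4k)$ (where $k$ is the final round index, fixed below) to obtain a coloring $\chi_i$ with $\max_x|d_{\chi_i}(Q_i,K_x)|\le\Delta_i:=c\,n_i^{1/2-1/d}\sqrt{\log(n_i/\delta_i)}$ --- the two-sided bound following because negating every random sign sends $\chi_i\mapsto-\chi_i$ while preserving its distribution, at the cost of a factor $2$ in $\delta_i$ --- and let $Q_{i+1}$ consist of the $+1$-colored points (plus any unmatched point). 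Deleting the $-1$ points from a set of size $n_i$ changes $\kde$ at every $x$ by exactly $d_{\chi_i}(Q_i,K_x)/n_i$, up to an $O(1/n_i)$ term for an unmatched point, so $\|\kde_{Q_i}-\kde_{Q_{i+1}}\|_\infty\le (\Delta_i+O(1))/n_i=O(n_i^{-1/2-1/d}\sqrt{\log(n_i/\delta_i)})$. Summing over $i=0,\dots,k-1$ and bounding each $\log(n_i/\delta_i)\le\log(mk/\delta)$, the accrued error is at most $\sqrt{\log(mk/\delta)}\cdot\sum_iO(n_i^{-1/2-1/d})=O(n_k^{-1/2-1/d}\sqrt{\log(mk/\delta)})$, since with $n_i=\Theta(m/2^i)$ the geometric sum is dominated by its last term.

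Now choose $k$, equivalently $n_k$, so that this is at most $\eps/2$: solving $n_k^{-1/2-1/d}\sqrt{\log(mk/\delta)}=\Theta(\eps)$ gives $n_k=\Theta\bigl(\eps^{-2d/(d+2)}(\log(mk/\delta))^{d/(d+2)}\bigr)$. With $m=\Theta(\eps^{-2}\log(1/\delta))$ this forces $k=\log_2(m/n_k)=O(\log(1/(\eps\delta)))$, hence $\log(mk/\delta)=O(\log(1/(\eps\delta)))$, so $S=Q_k$ has size $O\bigl(\eps^{-2d/(d+2)}\log^{d/(d+2)}(1/(\eps\delta))\bigr)$, as claimed, and a union bound over the $k$ colorings (each failing with probability at most $\delta_i$) and the sampling step yields overall success probability at least $1-\delta$. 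For the running time, sampling $Q$ takes $O(n)$; each halving step is dominated by the Euclidean matching behind Theorem~\ref{thm:disc-main}, and replacing the exact min-cost matching by a constant-factor approximation --- which inflates each $\Delta_i$, and thus the final size, only by a constant --- computable in near-linear time, the $O(\log(1/\eps))$ rounds on sets of size at most $m$ cost a total that is near-linear in $m$, so the construction runs in time $O(n)$ plus a term polynomial in $1/\eps$ (and polylogarithmic in $1/\delta$), i.e.\ $O(n/\eps^2)$.

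The step I expect to be the main obstacle is exactly this error bookkeeping under randomization: one must verify that boosting each round's success probability to $1-\delta_i$ with $\delta_i=\delta/(4k)$ --- which turns $\log(n_i/\delta)$ into $\log(n_ik/\delta)$ --- and replacing the exact matching by a fast approximate one still leaves the accrued error at $O(n_k^{-1/2-1/d}\sqrt{\log(1/(\eps\delta))})$ after the geometric sum, so that the clean $\log(1/(\eps\delta))$ factor of the statement survives. This goes through only because the initial subsampling pins $m$ at $\mathrm{poly}(1/\eps)\cdot\log(1/\delta)$, hence $k$ at $O(\log(1/(\eps\delta)))$, which makes every auxiliary logarithm collapse into $\log(1/(\eps\delta))$ up to constants --- confirming that collapse, together with the claim that an approximate matching costs no extra logarithmic factor in the discrepancy exponent, is where the real care is needed.
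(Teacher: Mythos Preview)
Your overall plan---feed Theorem~\ref{thm:disc-main} into the iterated-halving reduction and union-bound over the $O(\log(1/\eps))$ rounds---is exactly the paper's approach, and your size computation correctly unpacks what the black-box citation to \cite{Phi08} yields when the discrepancy is $O(n^{1/2-1/d}\sqrt{\log(n/\delta)})$.

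The gap is in your running-time argument. You assert that ``replacing the exact min-cost matching by a constant-factor approximation \ldots\ inflates each $\Delta_i$ \ldots\ only by a constant,'' but this is nowhere justified. The bound $\rho(B,M^*)\le\phi_d\,\rad(B)^d$ of Lemma~\ref{lem:match-length}, on which the entire discrepancy analysis rests, uses the optimality of $M^*$ in an essential way: both the Bern--Eppstein football-packing argument and the edge-swap argument that caps the number of long edges crossing $\partial B$ break down for a matching that is merely within a constant factor of optimal. The paper in fact lists precisely this question---whether an approximate matching attains the same discrepancy bound---as an open problem in its Future Directions. Without that claim, your sample-then-halve scheme runs the matching first on a set of size $m=\Theta(\eps^{-2}\log(1/\delta))$, costing $\Theta(m^3)$ with Edmond's algorithm, which does not give the stated time bound.

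The paper sidesteps this by citing the merge--reduce framework of \cite{CM96,Phi08} rather than halving directly on a large random sample: one partitions $P$ into groups of size $O(g(\eps))$, colors and halves within each group, merges pairs, and repeats. The exact $O(|\cdot|^3)$ matching is then never invoked on a set larger than $O(g(\eps))$, and the resulting $O(n\cdot g(\eps)^{2}\cdot\mathrm{polylog})$ total is what underlies the stated running time.
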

Note that in $\Reals^2$, the size is $O((1/\eps) \sqrt{\log(1/\eps \delta)})$, near-linear in $1/\eps$, and the runtime can be reduced to $O((n/\sqrt{\eps}) \log^5(1/\eps))$.  Furthermore, for $\Eu{B}$, the best known upper bounds for discrepancy (which are tight up to a $\log$ factor, see Appendix \ref{sec:lb}) are noticeably larger at $O((1/\eps)^{2d/(d+1)} \cdot \log^{d/(d+1)}(1/\eps))$, especially in $\Reals^2$ at $O((1/\eps)^{4/3} \log^{2/3}(1/\eps))$.  

We note that many combinatorial discrepancy results also use a matching where for each pair one is colored $+1$ and the other $-1$.  However, these matchings are ``with low crossing number''  and are not easy to construct.  For a long time these results were existential, relying on the pigeonhole principle.  But, recently Bansal~\cite{Ban10} provided a randomized constructive algorithm; also see a similar, simpler and more explicit, approach recently on the arXiv~\cite{LM12}.  Yet still these are quite more involved than our min-cost matching.  We believe that this simpler, and perhaps more natural, min-cost matching algorithm may be of independent practical interest.

\paragraph{Proof overview.}
The proof of Theorem \ref{thm:d-main} follows from Theorem \ref{thm:disc-main}, the above stated results in \cite{Phi08}, and Edmond's $O(n^3)$ time algorithm for min-cost matching $M$~\cite{Edm65}.  So the main difficulty is proving Theorem \ref{thm:disc-main}.  
We first outline this proof in $\Reals^2$ on $\Eu{T}$.  In particular, we focus on showing that the coloring $\chi$ derived from $M$, for any single kernel $K$, has $d_\chi(P,K) = O(\sqrt{\log (1/\delta)})$ with probability at least $1-\delta$.  Then in Section \ref{sec:2fam} we extend this to an entire family of kernels with $d_\chi(P,\Eu{K}) = O(\sqrt{\log(n/\delta)})$.  

The key aspect of kernels required for the proof is a bound on their slope, and this is the critical difference between binary range spaces (e.g. $\Eu{B}$) and kernel range spaces (e.g. $\Eu{T}$).  On the boundary of a binary range the slope is infinite, and thus small perturbations of $P$ can lead to large changes in the contents of a range; all lower bounds for geometric range spaces seem to be inherently based on creating a lot of action along the boundaries of ranges.  
For a kernel $K(x,\cdot)$ with slope bounded by $\sigma$, we use a specific variant of a Chernoff bound (in Section \ref{sec:m2disc}) that will depend only on $\sum_j \Delta_j^2$, where $\Delta_j = K(x,p_j) - K(x,q_j)$ for each edge $(p_j,q_j) \in M$.  
Note that $\sum_j |\Delta_j| \geq d_\chi(P,K)$ gives a bound on discrepancy, but analyzing this directly gives a $\poly(n)$ bound.  
Also, for a binary kernels $\sum_j \Delta_j^2 = \poly(n)$, but if the kernel slope is bounded then $\sum_j \Delta_j^2 \leq \sigma^2 \sum_j \|p_j - q_j\|^2$.    
Then in Section \ref{sec:match} we bound $\sum_j \|p_j - q_j\|^2=  O(1)$ within a constant radius ball, specifically the ball $B_x$ for which $K(x,\cdot) > 0$.  This follows (after some technical details) from a result of Bern and Eppstein~\cite{BE93}.  

Extending to $\Eu{G}$ requires multiple invocations (in Section \ref{sec:m2disc}) of the $\sum_j \Delta_j^2$ bound from Section \ref{sec:match}.  
Extending to $\Reals^d$ basically requires generalizing the matching result to depend on the sum of distances to the $d$th power (in Section \ref{sec:match}) and applying Jensen's inequality to relate $\sum_j \Delta_j^d$ to $\sum_j \Delta_j^2$ (in Section \ref{sec:m2disc}).  

\subsection{Motivation}

\paragraph{Near-linear sized $\eps$-samples.}
Only a limited family of range spaces are known to have $\eps$-samples with size near-linear in $1/\eps$.  Most reasonable range spaces in $\Reals^1$ admit an $\eps$-sample of size $1/\eps$ by just sorting the points and taking every $\eps |P|$th point in the sorted order.  However, near-linear results in higher dimensions are only known for range spaces defined by axis-aligned rectangles (and other variants defined by fixed, but non necessarily orthogonal axes) (c.f. \cite{STZ04,Phi08}).  
All results based on VC-dimension admit super-linear polynomials in $1/\eps$, with powers approaching $2$ as $d$ increases.  And random sampling bounds, of course, only provide $\eps$-samples of size $O(1/\eps^2)$.  

This polynomial distinction is quite important since for small $\eps$ (i.e. with $\eps = 0.001$, which is important for summarizing large datasets) then $1/\eps^2$ (in our example the size $1/\eps^2 = 1,000,000$) is so large it often defeats the purpose of summarization.  
Furthermore, most techniques other than random sampling (size $1/\eps^2$) are quite complicated and rarely implemented (many require Bansal's recent result~\cite{Ban10} or its simplification \cite{LM12}).  
One question explored in this paper is: what other families of ranges have $\eps$-samples with size near-linear in $1/\eps$?

This question has gotten far less attention recently than $\eps$-nets, a related and weaker summary.  In that context, a series of work~\cite{HW87,BEHW89,MSW90,KPW92,AES10,Ezr10} has shown that size bound of $O((1/\eps) \log (1/\eps))$ based on VC-dimension can be improved to $O((1/\eps) \log \log (1/\eps))$ or better in some cases.  Super-linear lower bounds are known as well~\cite{Alo10,PT10}.  
We believe the questions regarding $\eps$-samples are even more interesting because they can achieve polynomial improvements, rather than at best logarithmic improvements.  

\paragraph{$L_\infty$ kernel density estimates.}  
Much work (mainly in statistics) has studied the approximation properties of kernel density estimates; this grew out of the desire to remove the choice of where to put the breakpoints between bins in histograms.  
A large focus of this work has been in determining which kernels recover the original functions the best and how accurate the kernel density approximation is.  Highlights are the books of Silverman~\cite{Sil86} on $L_2$ error in $\Reals^1$, Scott~\cite{Sco92} one $L_2$ error in $\Reals^d$, and books by Devroye, Gy\"{o}rfi, and/or Lugosi~\cite{DG84,DL01,DGL96} on $L_1$ error.  We focus on $L_\infty$ error, which has not generally been studied.  

Typically the error is from two components: the error from use of a subset (between $\kde_P$ and $\kde_S$), and the error from convoluting the data with a kernel (between $P$ and $\kde_P$).  Here we ignore the second part, since it can be arbitrary large under the $L_\infty$ error. 
Typically in the first part, only random samples have been considered; we improve over these random sample bounds.  

Recently Chen, Welling, and Smola~\cite{CSW10} showed that for any positive definite kernel (including $\Eu{G}$, but not $\Eu{T}$ or $\Eu{B}$) a greedy process produces a subset of points $S \subset P$ such that $L_2(\kde_S, \kde_P) \leq \eps$ when $S$ is size $|S| = O(1/\eps)$.  This improves on standard results from random sampling theory that require $|S| = O(1/\eps^2)$ for such error guarantees.  
This result helped inspire us to seek a similar result under $L_\infty$ error.

\paragraph{Relationship with binary range spaces.}
Range spaces have typically required all ranges to be binary.  That is, each $p \in P$ is either completely in or completely not in each range $R \in \Eu{A}$.  Kernel range spaces were defined in a paper last year~\cite{JKPV11} (although similar concepts such as fat-shattering dimension~\cite{KS94} appear in the learning theory literature~\cite{BLW96,DGL96,ABCH97,Vap89}, their results were not as strong as \cite{JKPV11} requiring larger subsets $S$, and they focus on random sampling).  That paper showed that an $\eps$-sample for balls is also an $\eps$-sample for a large number of kernel range spaces.  An open question from that paper is whether that relationship goes the other direction: is an $\eps$-sample of a kernel range space also necessarily an $\eps$-sample for a binary range space?  
We answer this question in the negative; in particular, we show near-linear sized $\eps$-samples for kernel range spaces when it is known the corresponding binary range space must have super-linear size.  

This raises several questions: are these binary range spaces which require size super-linear in $1/\eps$ really necessary for downstream analysis?  Can we simplify many analyses by using kernels in place of binary ranges?  One possible target are the quite fascinating, but enormous bounds for bird-flocking~\cite{Cha10}.

\section{Preliminaries}
\label{sec:prelim}

For simplicity, we focus on only \emph{rotation-and-shift-invariant} kernels so $K(p_i,p_j) = k(\|p_i-p_j\|)$ can be written as a function of just the distance between its two arguments.  The rotation invariant constraint can easily be removed, but would complicate the technical presentation.  We also assume the kernels have been scaled so $K(p,p) = k(0) = 1$.  Section \ref{sec:ext} discusses removing this assumption.  

We generalize the family of kernels $\Eu{T}$ (see above) to $\Eu{S}_\sigma$ which we call \emph{$\sigma$-bounded}; they have slope bounded by a constant $\sigma > 0$ (that is for any $x,q,p \in \Reals^d$ then $|K(x,p) - K(x,q)|/\|q-p\| \leq \sigma$), and bounded domain $B_x = \{p \in \Reals^d \mid K(x,p) > 0\}$ defined for any $x \in \Reals^d$.  For a rotation-and-shift-invariant kernel, $B_x$ is a ball, and for simplicity we assume the space has been scaled so $B_x$ has radius $1$ (this is not critical, as we will see in Section \ref{sec:ext}, since as the radius increases, $\sigma$ decreases).  In addition to $\Eu{T}$ this class includes, for instance,  the Epanechnikov kernels $\Eu{E}$ so for $K(x,\cdot) \in \Eu{E}$ is defined $K(x,p) = \max \{0, 1-\|x-p\|^2\}$.  

We can also generalize $\Eu{G}$ (see above) to a family of exponential kernels such that $|k(z)| \leq \exp(-|\poly(z)|)$ and has bounded slope $\sigma$; this would also include, for instance, the Laplacian kernel. 
For simplicity, for the remainder of this work we focus technically on the Gaussian kernel.

Let $\rad(B)$ denote the radius of a ball $B$.  
The $d$-dimensional volume of a ball $B$ of radius $r$ is denoted $\vol_d(r) = (\pi^{d/2}/\Gamma(d/2+1)) r^d$ where $\Gamma(z)$ is the gamma function; note $\Gamma(z)$ is increasing with $z$ and when $z$ is a positive integer $\Gamma(z) = (z-1)!$.  
For balls of radius $1$ we set $\vol_d(1) = V_d$, a constant depending only on $d$.  In general $\vol_d(r) = V_d r^d$.  
Note that our results hold in $\Reals^d$ where $d$ is assumed constant, and $O(\cdot)$ notation will absorb terms dependent only on $d$.

\section{Min-Cost Matchings within Balls}
\label{sec:match}

Let $P \subset \Reals^d$ be a set of $2n$ points.  We say $M(P)$ (or $M$ when the choice of $P$ is clear) is a \emph{perfect matching} of $P$ if it defines a set of $n$ unordered pairs $(q,p)$ of points from $P$ such that every $p \in P$ is in exactly one pair.  We define \emph{cost of a perfect matching} in terms of the sum of distances:  
$
c(M) = \sum_{(q,p) \in M} ||q - p||.
$
The \emph{min-cost perfect matching} is the matching $M^* = \arg \min_{M} c(M)$. 

The proof of the main result is based on a lemma that relates the density of matchings in $M^*$ to the volume of a ball containing them.  For any ball $B \subset \Reals^d$, define the \emph{length} $\rho(B,M)$ of the matchings from $M$ within $B$ as:
\[
\rho(B,M) = \sum_{(q,p) \in M} 
\begin{cases}
\|q - p\|^d & \text{if } q,p \in B 
\\
\|q-p_{B}\|^d & \text{if } q \in B \text{ and } p \notin B \text{, where } p_{B} \text{ is the intersection of } \overline{q p} \text{ with } \partial B
\\ 
0 & \text{if } q,p \notin B
\end{cases}
\]

\begin{lemma}
There exists a constant $\phi_d$ that depends only on $d$, such that 
for any ball $B \subset \Reals^d$ the length 
$\rho(B,M^*) \leq (\phi_d/V_d) \vol_d(\rad(B)) = \phi_d \rad(B)^d$.
\label{lem:match-length}
\end{lemma}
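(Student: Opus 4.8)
The plan is to prove the bound by a packing/charging argument: show that the edges of $M^*$ that lie within $B$ cannot be too numerous relative to their lengths, because the minimality of $M^*$ forbids short edges from being too densely packed. First I would reduce to the case where $B$ has radius exactly $1$; by scaling all distances by $1/\rad(B)$, both $\rho(B,M^*)$ and $\rad(B)^d$ scale by $\rad(B)^d$, and min-cost matchings are preserved under uniform scaling, so it suffices to bound $\rho(B,M^*) \leq \phi_d$ for a unit ball $B$. The clipping convention (replacing $\|q-p\|^d$ by $\|q - p_B\|^d$ when only one endpoint is in $B$) only decreases each term, so I may freely replace every relevant edge by its portion inside $B$ and bound $\sum \ell_e^d$ where $\ell_e$ is the clipped length and the sum is over edges meeting $B$.

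The heart of the argument is the \emph{no-crossing / separation} property of min-cost matchings under the sum-of-distances cost: if $(q_1,p_1)$ and $(q_2,p_2)$ are two edges of $M^*$, then the open segments $\overline{q_1 p_1}$ and $\overline{q_2 p_2}$ cannot "cross" in a way that an exchange argument would shorten — concretely, swapping partners to $(q_1,q_2),(p_1,p_2)$ or $(q_1,p_2),(q_2,p_1)$ cannot strictly decrease the total length, which by the triangle inequality forces a geometric constraint preventing two long edges from being close together. The way I would exploit this quantitatively is via a result of Bern and Eppstein~\cite{BE93}: they show that for a min-cost matching (indeed, more generally), if one places a ball of radius $\Theta(\ell_e)$ around (say) the midpoint of each edge $e$, these balls have bounded overlap — any point of $\Reals^d$ lies in only $O(1)$ of them (a constant depending on $d$). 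Given such a bounded-depth family of balls $\{B_e\}$ with $\mathrm{rad}(B_e) = c\,\ell_e$, and noting that every edge $e$ meeting the unit ball $B$ has its associated ball $B_e$ contained in a ball $B'$ of radius $1 + 2c = O(1)$ concentric with $B$ (after a short argument that clipped edges are also short, since $\ell_e \leq 2$), I would integrate depths:
\[
\sum_{e} \vol_d(B_e) \;=\; \int_{\Reals^d} \Big(\sum_e \mathbf{1}[x \in B_e]\Big)\, dx \;\leq\; O(1) \cdot \vol_d(B') \;=\; O(1).
\]
Since $\vol_d(B_e) = V_d (c\,\ell_e)^d = V_d c^d \ell_e^d$, this gives $\sum_e \ell_e^d \leq \phi_d$ for a constant $\phi_d$ depending only on $d$, which is exactly what is needed.

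The main obstacle is establishing (or correctly citing and adapting) the bounded-overlap property of the edge-balls for the sum-of-distances min-cost matching — i.e., that a min-cost matching has no two edges of comparable length whose balls overlap heavily. The exchange/triangle-inequality argument gives, for any two edges, a lower bound on the distance between them in terms of $\min(\ell_{e_1},\ell_{e_2})$; turning this pairwise separation into a global bounded-depth statement is the step I expect to require care, and is presumably where the "technical details" alluded to after the lemma statement live (in particular, handling edges with one endpoint outside $B$, and making sure the constant in the separation bound does not degrade when edge lengths differ by a large factor — which is why charging to the \emph{smaller} edge, or to edge midpoints with radius proportional to length, is the right bookkeeping). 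The remaining pieces — the scaling reduction, the $\ell_e \leq 2$ clipping bound, and the depth-integration — are routine.
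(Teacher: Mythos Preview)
Your approach is essentially the paper's: scale so $B$ has fixed radius, invoke Bern--Eppstein~\cite{BE93} to charge $\sum_e \ell_e^d$ against volume via a bounded-overlap family of shapes, and treat boundary-crossing edges as the residual technical issue. The paper uses Bern--Eppstein's ``footballs'' (the locus of points seeing the edge at angle $\le 170^\circ$) rather than balls, but that is cosmetic.

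The one genuine gap in your sketch is the long-edge case, and your proposed framework does not quite close it. You place a ball $B_e$ of radius $\Theta(\ell_e)$ around the midpoint of the \emph{clipped} segment and appeal to bounded overlap; but Bern--Eppstein's overlap bound is for shapes attached to the \emph{full} edge. If $(q,p)$ has $q\in B$ and $\|p\|$ enormous, the Bern--Eppstein football lives far from $B$ and gives you no control on how many such clipped segments can pile up inside $B$. The paper handles this with a separate, elementary argument: it splits second-type edges into ``short'' ones (say $\|p-x\|<10$), which fall under Bern--Eppstein after a harmless rescale, and ``long'' ones, and shows by an angular-packing plus exchange inequality that at most $360^{d-1}$ long edges can have an endpoint in $B$ --- if two long edges leave $B$ in directions within $1^\circ$, swapping partners strictly shortens the matching. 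Since each long edge contributes at most $1$ to $\rho(B,M^*)$ (its clipped length is $\le 1$ after scaling), this caps their total contribution by a dimensional constant. You correctly flagged ``edges with one endpoint outside $B$'' as the obstacle, but you should expect to need this counting step rather than a direct extension of the overlap lemma to clipped segments.
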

\begin{proof}
A key observation is that we can associate each edge $(q, p) \in M^*$ with a shape that has $d$-dimensional volume proportional to $\|q - p\|^d$, and if these shapes for each edge do not overlap too much, then the sum of edge lengths to the $d$th power is at most proportional to the volume of the ball that contains all of the points.  

In fact, Bern and Eppstein~\cite{BE93} perform just such an analysis considering the minimum cost matching of points within $[0,1]^d$.  They bound the sum of the $d$th power of the edge lengths to be $O(1)$.  For an edge $(q, p)$, the shape they consider is a \emph{football}, which includes all points $y$ such that $q y p$ makes an angle of at most $170^\circ$; so it is thinner than the disk with diameter from $q$ to $p$, but still has $d$-dimensional volume $\Omega(\|q - p\|^d)$.    

To apply this result of Bern and Eppstein, we can consider a ball of radius $1/2$ that fits inside of $[0,1]^d$ by scaling down all lengths uniformly by $1/2\rad(B)$.  If we show the sum of $d$th power of edge lengths is $O(1)$ for this ball, and by scaling back we can achieve our more general result.  From now we will assume $\rad(B) = 1/2$.  Now the sum of $d$th power of edge lengths where both endpoints are within $B$ is at most $O(1)$ since these points are also within $[0,1]^d$.  

Handling edges of the second type where $p \notin B$ is a bit more difficult.  
Let $B$ be centered at a point $x$.  First, we can also handle all ``short'' edges $(q, p)$ where $\|x - p\| < 10$ since we can again just scale these points appropriately losing a constant factor, absorbed in $O(1)$.  

Next, we argue that no more than $360^{d-1}$ ``long'' edges $(q, p) \in M^*$ can exist with $q \in B$ and $\| p - x\| > 10$.  This implies that there are two such edges $(q, p)$ and $(q', p')$ where the angle between the vectors from $x$ to $p$ and from $x$ to $p'$ differ by at most $1$ degree.  We can now demonstrate that both of these edges cannot be in the minimum cost matching as follows.  

\noindent\includegraphics[width=\linewidth]{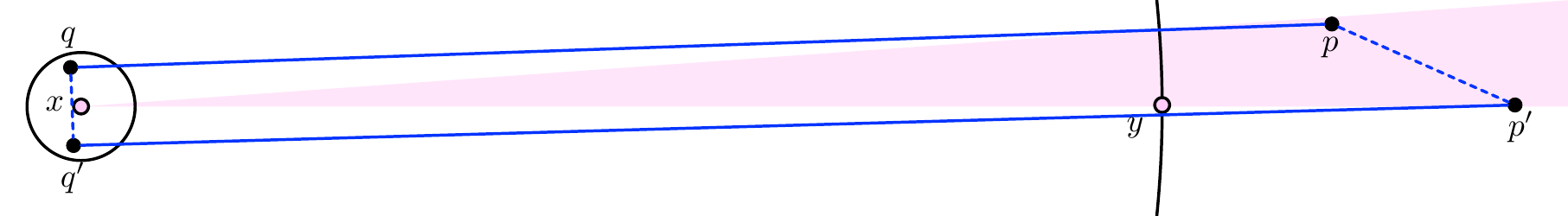}

Let $10 < \|x - p\| < \|x - p'\|$ and let $y$ be the point a distance $10$ from $x$ on the ray towards $p'$.  
We can now see the following simple geometric facts:
\textsf{(F1)} $\|q - q'\| < 1$, 
\textsf{(F2)} $\|p - y\| < \|p - x\|$, 
\textsf{(F3)} $\|p' - y\| + 10 = \|p' - x\|$, 
\textsf{(F4)} $\|p - y\| + \|p' - y\| \leq \|p - p'\|$, and
\textsf{(F5)} $\|p - x\| - 1/2 \leq \|p - q\|$ and $\|p'-x\| - 1/2 \leq \|p'-q'\|$.  
It follows that
\[
\|q - q'\| + \|p - p'\|
\leq
1 + \|p - y\| + \|p' - y\|
\leq
1 + \|p - x\| + \|p - x\| - 10
<
\|p - q\| + \|p' - q'\|.
\]
\vspace{-.3in}
\[
\text{via \textsf{(F1)} and \textsf{(F4)} \hspace{.6in} via \textsf{(F2)} and \textsf{(F3)} \hspace{1.1in} via \textsf{(F5)} \hspace{.3in}}
\]
Thus it would be lower cost to swap some pair of edges if there are more than $360^{d-1}$ of them.  Since each of these at most $360^{d-1}$ long second-type edges can contribute at most $1$ to $\rho(B,M^*)$ this handles the last remaining class of edges, and proves the theorem.  
\end{proof}

\section{Small Discrepancy for Kernel Range Spaces}
\label{sec:m2disc}

We construct a coloring $\chi : P \to \{-1,+1\}$ of $P$ by first constructing the minimum cost perfect matching $M^*$ of $P$, and then for each $(p_j,q_j) \in M^*$ we randomly color one of $\{p_j,q_j\}$ as $+1$ and the other $-1$.  

To bound the discrepancy of a single kernel $d_\chi(P,K)$ we consider a random variable $X_j = \chi(p_j) K(x,p_j) + \chi(q_j) K(x,q_j)$ for each pair $(p_j, q_j) \in M^*$, so $d_\chi(P,K) = |\sum_j X_j|$.  We also define a value $\Delta_j = 2| K(x,p_j) - K(x,q_j)|$ such that $X_j \in \{-\Delta_j/2, \Delta_j/2\}$.  
The key insight is that we can bound $\sum_j \Delta_j^2$ using the results from Section \ref{sec:match}.  
Then since each $X_j$ is an independent random variable and has $E[X_j] = 0$, we are able to apply a Chernoff bound on $d_{\chi}(P,K) = |\sum_j X_j|$ that says 
\begin{equation}
\label{eq:CH}
\prob{d_\chi (P,K) > \alpha} \leq 2\exp\left(\frac{-2\alpha^2}{\sum_j \Delta_j^2}\right).
\end{equation}
In $\Reals^2$, for $\sigma$-bounded kernels achieving a probabilistic discrepancy bound is quite straight-forward at this point (using Lemma \ref{lem:match-length}); and can be achieved for Gaussian kernels with a bit more work.  

However for points $P \subset \Reals^d$ for $d>2$ applying the above  bound is not as efficient since we only have a bound on $\sum_j \Delta_j^d$.  We can attain a weaker bound using the Jensen's inequality over at most $n$ terms
\begin{equation}
\left(\sum_j \frac{1}{n} \Delta_j^2\right)^{d/2} \leq \sum_j \frac{1}{n} \left(\Delta_j^2\right)^{d/2}
\;\; \text{ so we can state } \;\;
\sum_j \Delta_j^2 \leq n^{1-2/d} \left(\sum_j \Delta_j^d\right)^{2/d}.
\label{eq:Jensen}
\end{equation}

\paragraph{$\sigma$-bounded kernels.}  We start with the result for $\sigma$-bounded kernels using Lemma \ref{lem:match-length}.

\begin{lemma}
In $\Reals^d$, for any kernel $K \in \Eu{S}_\sigma$ we can construct a coloring $\chi$ such that $\Pr[d_\chi(P,K) > n^{1/2 - 1/d}\sigma (\phi_d)^{1/d} \sqrt{2\ln(2/\delta)}] \leq \delta$ for any $\delta > 0$.
\label{lem:discK}
\end{lemma}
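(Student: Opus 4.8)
The plan is to combine the Chernoff bound in \eqref{eq:CH}, the matching-length bound of Lemma~\ref{lem:match-length}, and (for $d>2$) the Jensen-type inequality \eqref{eq:Jensen}. Fix an arbitrary kernel $K = K_x \in \Eu{S}_\sigma$ with bounded domain $B_x$, a ball of radius $1$. First I would observe that only edges $(p_j,q_j) \in M^*$ with at least one endpoint in $B_x$ contribute to $d_\chi(P,K)$, since $K(x,\cdot)$ vanishes outside $B_x$; for all other edges $\Delta_j = 0$. For a contributing edge, I want to bound $|\Delta_j|$ in terms of the matching-length term associated to the ball $B_x$. If both $p_j,q_j \in B_x$, then $|\Delta_j| = 2|K(x,p_j)-K(x,q_j)| \le 2\sigma \|p_j-q_j\|$ directly from the $\sigma$-bounded slope, so $\Delta_j^d \le (2\sigma)^d \|p_j-q_j\|^d$. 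If $q_j \in B_x$ but $p_j \notin B_x$, then since $K(x,p_j)=0=K(x,p_{B_x})$ (the boundary point $p_{B_x}$ lies on $\partial B_x$), we have $|\Delta_j| = 2|K(x,q_j)| = 2|K(x,q_j)-K(x,p_{B_x})| \le 2\sigma \|q_j - p_{B_x}\|$, which is exactly $2\sigma$ times the quantity appearing in the second case of the definition of $\rho(B_x,M)$. Hence in all cases $\Delta_j^d$ is at most $(2\sigma)^d$ times the corresponding summand of $\rho(B_x,M^*)$, and summing gives $\sum_j \Delta_j^d \le (2\sigma)^d \rho(B_x,M^*) \le (2\sigma)^d \phi_d$ by Lemma~\ref{lem:match-length} with $\rad(B_x)=1$.

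Next I would convert this into a bound on $\sum_j \Delta_j^2$. When $d=2$ this is immediate: $\sum_j \Delta_j^2 \le 4\sigma^2 \phi_2$. For general $d \ge 2$, apply \eqref{eq:Jensen}: the matching has at most $n$ edges, so $\sum_j \Delta_j^2 \le n^{1-2/d}\big(\sum_j \Delta_j^d\big)^{2/d} \le n^{1-2/d} \big((2\sigma)^d \phi_d\big)^{2/d} = n^{1-2/d}\, 4\sigma^2 \phi_d^{2/d}$. (One should double-check the constant: the statement has $\sigma(\phi_d)^{1/d}\sqrt{2\ln(2/\delta)}$ inside the probability, and $(2\sigma)^2 = 4\sigma^2$, so the factor of $4$ combines with the $-2\alpha^2$ in \eqref{eq:CH} — see the next step.) Then plug into the Chernoff bound \eqref{eq:CH} with $\alpha = n^{1/2-1/d}\sigma(\phi_d)^{1/d}\sqrt{2\ln(2/\delta)}$: the exponent becomes
\[
\frac{-2\alpha^2}{\sum_j \Delta_j^2} \le \frac{-2 \cdot n^{1-2/d}\sigma^2 \phi_d^{2/d} \cdot 2\ln(2/\delta)}{n^{1-2/d}\, 4\sigma^2 \phi_d^{2/d}} = -\ln(2/\delta),
\]
so $\prob{d_\chi(P,K) > \alpha} \le 2\exp(-\ln(2/\delta)) = \delta$, as claimed.

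The main obstacle, and the part deserving the most care, is the boundary case where an edge straddles $\partial B_x$: one must argue carefully that replacing $p_j$ by the boundary intersection $p_{B_x}$ is exactly the right bookkeeping, i.e. that $|K(x,p_j)-K(x,q_j)| = |K(x,q_j)|\le \sigma\|q_j-p_{B_x}\|$ rather than the looser $\sigma\|q_j-p_j\|$, and that this matches the definition of $\rho$ so Lemma~\ref{lem:match-length} applies verbatim. A second, more routine, point is justifying \eqref{eq:CH} itself from Hoeffding's inequality for the bounded independent mean-zero variables $X_j \in \{-\Delta_j/2,\Delta_j/2\}$ — this is standard but should be stated so the factor of $4$ in $(2\sigma)^2$ versus the $\sigma$ in the theorem statement is tracked correctly. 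Everything else (the vanishing of $K$ outside $B_x$, the $d=2$ simplification, the arithmetic in the exponent) is mechanical once these two points are nailed down.
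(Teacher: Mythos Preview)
Your proof is correct and follows essentially the same approach as the paper: bound $\sum_j \Delta_j^d \le (2\sigma)^d \rho(B_x,M^*) \le (2\sigma)^d \phi_d$ via the slope bound and Lemma~\ref{lem:match-length} (handling the straddling edges by replacing $p_j$ with $p_{B_x}$, exactly as the paper does), convert to $\sum_j \Delta_j^2$ via \eqref{eq:Jensen}, and plug into \eqref{eq:CH} with the stated $\alpha$. Your treatment of the boundary case and the constant tracking are in fact more explicit than the paper's, but the argument is the same.
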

\begin{proof}
Consider some $K(x,\cdot) \in \Eu{S}_\sigma$ and 
recall $B_x = \{y \in \Reals^2 \mid K(x,y) > 0\}$.  Note that 
\[
\sum_j \Delta_j^d 
= 
\sum_j 2^d(K(p_j,x) - K(q_j,x))^d 
\leq 
2^d \sigma^d \rho(B_x,M^*) 
\leq 
\sigma^d 2^d \phi_d \rad(B_x)^d 
<
\sigma^d 2^{d} \phi_d,
\]
where the first inequality follows by the slope of $K \in \Eu{T}$ being at most $\sigma$ and $K(p,x) = 0$ for $p \notin B$, since we can replace $p_j$ with $p_{j,B_x}$ when necessary since both have $K(x,\cdot) =0$, and the second inequality follows by Lemma \ref{lem:match-length}.  
Hence, by Jensen's inequality (i.e. (\ref{eq:Jensen})) $\sum_j \Delta_j^2 \leq n^{1-2/d} (\sigma^d 2^d \phi_d)^{2/d} = n^{1-2/d} \sigma^2 4 (\phi_d)^{2/d}$.  

We now study the random variable $d_\chi(P,K) = |\sum_i X_i|$ for a single $K \in \Eu{S}_\sigma$.  
Invoking (\ref{eq:CH}) we can bound
$\Pr[d_\chi(P,K) > \alpha] \leq 2 \exp(-\alpha^2/(n^{1-2/d} \sigma^2 2 (\phi_d)^{2/d}))$.
Setting $\alpha = n^{1/2 - 1/d} \sigma (\phi_d)^{1/d} \sqrt{2 \ln(2/\delta)}$ reveals $\Pr[d_\chi(P,K) > n^{1/2 - 1/d} \sigma (\phi_d)^{1/d} \sqrt{2\ln(2/\delta)}] \leq \delta$.  
\end{proof}

For $\Eu{T}$ and $\Eu{E}$ the bound on $\sigma$ is $1$ and $2$, respectively.  Also note that in $\Reals^2$ the expected discrepancy for any one kernel is independent of $n$.

\paragraph{Gaussian kernels.}  
Now we extend the above result for $\sigma$-bounded kernels to Gaussian kernels.  It requires a nested application of Lemma \ref{lem:match-length}.  
Let $z_i = 1/2^i$ and $B_i = \{p \in \Reals^d \mid K(x,p) \geq z_i\}$; let $A_i = B_i \setminus B_{i-1}$ be the annulus of points with $K(x,p) \in [z_i, z_{i-1})$.  For simplicity define $B_0$ as empty and $A_1 = B_1$.  
We can bound the slopes within each annulus $A_i$ as $\sigma_1 = 1 = \max \frac{d}{dy} (-k(y))$, and more specifically for $i \geq 2$ then $\sigma_i = \max_{y \in A_i} \frac{d}{dy}( - k(y)) = k(z_{i-1}) = 1/2^{i-1}$.  
Define $y_i = \sqrt{i \ln 2}$ so that $k(y_i) = z_i$.  

We would like to replicate Lemma \ref{lem:match-length} but to bound $\rho(B,M^*)$ for annuli $A_i$ instead of $B$.  However, this cannot be done directly because an annulus can become skinny, so its $d$-dimensional volume is not proportional to its width to the $d$th power.  This in turn allows for long edges within an annulus that do not correspond to a large amount of volume in the annulus.  

We can deal with this situation by noting that each edge either counts towards a large amount of volume towards the annulus directly, or could have already been charged to the ball contained in the annulus.  This will be sufficient for the analysis that follows for Gaussian kernels.  

Define $\rho(A_i,M)$ as follows for an annulus centered at point $x$.  For each edge $(p_j, q_j) \in M$ let $q_{j,i}$ be the point on $\overline{p_j q_j} \cap A_i$ furthest from $x$; if there is a tie, choose one arbitrarily.  Let $p_{j,i}$ be the point on $\overline{p_j q_j} \cap A_i$ closest to $x$, if there is a tie, choose the one closer to $q_{j,i}$.  
Then $\rho(A_i,M) = \sum_{(p_j,q_j) \in M} \|p_{j,i} - q_{j,i}\|^d$.  

\begin{lemma}
$\rho(A_i, M) \leq \rho(B_i,M) - \rho(B_{i-1},M)$ for $i \geq 1$.  
\label{lem:ann-bnd}
\end{lemma}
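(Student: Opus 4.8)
The plan is to prove the inequality edge-by-edge: I will show that for each matching edge $(p_j,q_j) \in M$, its contribution $\|p_{j,i} - q_{j,i}\|^d$ to $\rho(A_i,M)$ is at most its combined contribution to $\rho(B_i,M)$ minus its contribution to $\rho(B_{i-1},M)$, and then sum over $j$. Fix an edge and let $s = \overline{p_j q_j}$ be the corresponding segment. The key geometric observation is that $A_i = B_i \setminus B_{i-1}$ is an annulus centered at $x$ with outer ball $B_i$ and inner ball $B_{i-1}$, and that the points $p_{j,i}, q_{j,i}$ are, by definition, the closest and furthest points of $s \cap A_i$ from $x$. So the sub-segment $\overline{p_{j,i} q_{j,i}}$ lies inside the closed annulus $\overline{A_i}$ (modulo the boundary-tie conventions, which I will need to check do not cause trouble), and in particular inside $\overline{B_i}$.

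First I would set up the relevant quantities for a single edge. Write $\ell_i$ for the length of the portion of $s$ that lies in $B_i$ after the clipping described in the definition of $\rho(B_i,M)$ — that is, $\ell_i = \|q' - p'\|$ where $q',p'$ are obtained from $s$ by replacing an endpoint outside $B_i$ by its intersection with $\partial B_i$ (and $\ell_i = 0$ if $s \cap B_i = \emptyset$). Then the edge's contribution to $\rho(B_i,M)$ is $\ell_i^d$ and to $\rho(B_{i-1},M)$ is $\ell_{i-1}^d$, and I want $\|p_{j,i}-q_{j,i}\|^d \le \ell_i^d - \ell_{i-1}^d$. Since $B_{i-1} \subset B_i$ (as $z_{i-1} > z_i$) the clipped segment for $B_{i-1}$ is contained in the clipped segment for $B_i$; geometrically the clipped segment in $B_i$ is split by the boundary $\partial B_{i-1}$ into at most three pieces — a piece inside $B_{i-1}$ (the $B_{i-1}$-clip, of length $\ell_{i-1}$) and the remaining pieces inside the annulus. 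The definition of $p_{j,i}, q_{j,i}$ picks out a single interval $\overline{p_{j,i}q_{j,i}} \subseteq s \cap \overline{A_i}$, and since a segment meets a ball in an interval, $s \cap A_i$ consists of at most two intervals, so $\|p_{j,i}-q_{j,i}\|$ is at most the total length of $s$ lying in $\overline{A_i}$ but at least the length of the longer of the two intervals. Either way, $\|p_{j,i}-q_{j,i}\| + \ell_{i-1} \le \ell_i$ — the annulus portion plus the inner-ball portion is at most the whole clipped length in $B_i$ — and since for nonnegative reals $a,b$ with $a+b \le c$ we have $a^d \le c^d - b^d$ when $d \ge 1$, the per-edge inequality follows.

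The step I expect to be the main obstacle is making the segment-intersection geometry fully rigorous, in particular the tie-breaking conventions: the definition allows $\overline{p_j q_j} \cap A_i$ to be disconnected (the segment can enter the annulus, dip into $B_{i-1}$, and re-emerge), and then $\|p_{j,i}-q_{j,i}\|$ is the span of the \emph{whole} of $s \cap A_i$ together with the swallowed inner part, not just one connected arc. I need to argue this span, minus the part of it lying in $B_{i-1}$, is still bounded by the annulus contribution in a way compatible with $\ell_i - \ell_{i-1}$; concretely, $\|p_{j,i} - q_{j,i}\|$ equals the length of the clipped $B_i$-segment minus the lengths of any portions of that segment outside $A_i$ on the $q$-side or $p$-side, but the only such portion sitting strictly between $p_{j,i}$ and $q_{j,i}$ is the piece inside $B_{i-1}$, whose length is at most $\ell_{i-1}$. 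So $\|p_{j,i}-q_{j,i}\| \le \ell_i - (\text{length of } s \cap B_{i-1} \text{ between } p_{j,i} \text{ and } q_{j,i}) \le \ell_i$, and more carefully $\|p_{j,i}-q_{j,i}\| + (\text{that length}) \le \ell_i$ while that length is itself at most $\ell_{i-1}$; combined with superadditivity of $t \mapsto t^d$ this gives $\|p_{j,i}-q_{j,i}\|^d \le \ell_i^d - \ell_{i-1}^d$. Summing over all edges $(p_j,q_j) \in M$ yields $\rho(A_i,M) \le \rho(B_i,M) - \rho(B_{i-1},M)$. The base case $i=1$ is immediate since $B_0 = \emptyset$, $\rho(B_0,M) = 0$, and $A_1 = B_1$, so the inequality reads $\rho(A_1,M) \le \rho(B_1,M)$, which holds with equality up to the clipping conventions.
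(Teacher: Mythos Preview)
Your proof is correct and follows the same approach as the paper: an edge-by-edge case analysis combined with the superadditivity of $t \mapsto t^d$ (i.e., $a^d + b^d \le (a+b)^d$ for $a,b \ge 0$ and $d \ge 1$). The paper's proof is a three-sentence sketch of exactly this argument, and your version is a fleshed-out rendering of the same idea --- in particular, your identification of the ``split'' case and the inequality $\|p_{j,i}-q_{j,i}\| + \ell_{i-1} \le \ell_i$ is precisely what the paper means by ``contribute \ldots\ in a superadditive way.''
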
 
\begin{proof}
For each edge $(p_j, q_j) \in M$ such that $\overline{p_j q_j}$ intersects $B_i$, has $\overline{p_i q_i} \cap B_i$ as either entirely, partially, or not at all in $A_i$.  Those that are entirely in or not at all in $A_i$ either contribute exclusively to $\rho(A_i,M)$ or $\rho(B_{i-1},M)$, respectively.  Those that are partially in $A_i$ contribute to $A_i$ and $B_{i-1}$ in a superadditive way towards $B_i$ leading to the claimed inequality.  
\end{proof}

Now notice that since $y_i = \sqrt{i \ln 2}$ then $\rho(B_i,M^*) \leq \phi_d (\ln 2)^{d/2} i^{d/2}$, by Lemma \ref{lem:match-length}.

\begin{lemma}
Let $\Psi(n,d,\delta) = O(n^{1/2 -1/d} \sqrt{\ln(1/\delta)})$.  
In $\Reals^d$, for any kernel $K \in \Eu{G}$ we can construct a coloring $\chi$ such that $\Pr[d_\chi(P,K) > \Psi(n,d,\delta)] \leq \delta$ for any $\delta > 0$.
\label{lem:disc-G}
\end{lemma}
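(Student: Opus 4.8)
The plan is to bound $\sum_j \Delta_j^2$ for a fixed Gaussian kernel $K(x,\cdot)$ by a constant times $n^{1-2/d}$, and then feed this into the Chernoff bound (\ref{eq:CH}) exactly as in the proof of Lemma \ref{lem:discK}. The extra work over the $\sigma$-bounded case is that the Gaussian has no bounded domain, so a single application of Lemma \ref{lem:match-length} no longer suffices; instead I will chop each matching edge into its pieces inside the annuli $A_i$ and exploit the geometric decay of the per-annulus slopes $\sigma_i$.

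First I would decompose each $\Delta_j$ across annuli. Writing $\delta_{j,i} = \|p_{j,i} - q_{j,i}\|$ for the chord of $\overline{p_j q_j} \cap A_i$ used in the definition of $\rho(A_i,M)$, I claim $|K(x,p_j) - K(x,q_j)| \le \sum_i \sigma_i \delta_{j,i}$, hence $\Delta_j \le 2\sum_i \sigma_i \delta_{j,i}$. To see this, note $B_i$ is exactly the ball of radius $y_i = \sqrt{i\ln 2}$ about $x$, so $k(\|x-p_j\|) - k(\|x-q_j\|)$ telescopes over the radial intervals $[y_{i-1},y_i]$; on the portion lying in $A_i$ the value of $K$ changes by at most $\sigma_i$ times the radial width of that portion, and by the intermediate value theorem (distance to $x$ varies continuously along the segment) together with the reverse triangle inequality this radial width is at most $\delta_{j,i}$. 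The sum converges since $\sum_i \sigma_i = 1 + \sum_{i\ge 2} 2^{-(i-1)} = 2$.

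Next I would apply Cauchy--Schwarz with the weights $\sigma_i$ to separate the annuli: $\Delta_j^2 \le 4\big(\sum_i \sigma_i \delta_{j,i}\big)^2 \le 4\big(\sum_i \sigma_i\big)\big(\sum_i \sigma_i \delta_{j,i}^2\big) \le 8\sum_i \sigma_i \delta_{j,i}^2$. Summing over edges and swapping the order of summation gives $\sum_j \Delta_j^2 \le 8\sum_i \sigma_i \sum_j \delta_{j,i}^2$. For $d=2$ the inner sum is exactly $\rho(A_i,M^*)$; for $d>2$ I would apply Jensen's inequality (\ref{eq:Jensen}) over the at most $n$ edges to get $\sum_j \delta_{j,i}^2 \le n^{1-2/d}\,\rho(A_i,M^*)^{2/d}$. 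In either case Lemma \ref{lem:ann-bnd} together with $\rho(B_i,M^*)\le \phi_d(\ln 2)^{d/2} i^{d/2}$ (Lemma \ref{lem:match-length}) bounds $\rho(A_i,M^*) \le \phi_d(\ln 2)^{d/2} i^{d/2}$, so $\rho(A_i,M^*)^{2/d} \le (\phi_d)^{2/d}(\ln 2)\,i$. Finally $\sum_i \sigma_i\, i = 1 + \sum_{i\ge 2} i\,2^{-(i-1)} = O(1)$ because the slopes decay geometrically, which yields $\sum_j \Delta_j^2 = O(n^{1-2/d})$, a constant when $d=2$.

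With $\sum_j \Delta_j^2 \le c\,n^{1-2/d}$ in hand, (\ref{eq:CH}) gives $\Pr[d_\chi(P,K) > \alpha] \le 2\exp(-2\alpha^2/(c\,n^{1-2/d}))$, and choosing $\alpha = \sqrt{c/2}\; n^{1/2-1/d}\sqrt{\ln(2/\delta)} = \Psi(n,d,\delta)$ completes the proof. I expect the first step to be the main obstacle: making the per-edge decomposition across the infinitely many annuli rigorous — in particular relating the radial width traversed inside each annulus to the chord $\delta_{j,i}$ that appears in $\rho(A_i,M^*)$ — and then choosing the Cauchy--Schwarz weights so that the resulting series $\sum_i \sigma_i\,\rho(A_i,M^*)^{2/d}$ is controlled by the geometric decay of $\sigma_i$ rather than by the $i^{d/2}$ growth of $\rho(B_i,M^*)$.
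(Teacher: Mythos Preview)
Your proof is correct, and it follows the same annulus decomposition as the paper but handles the passage from the per-edge decomposition to a bound on $\sum_j \Delta_j^2$ differently.

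The paper first bounds $\sum_j \Delta_j^d$ by an absolute constant and only then applies Jensen's inequality once, globally, to reach $\sum_j \Delta_j^2 \le n^{1-2/d}\cdot C^{2/d}$. To get the $d$th-power bound it uses an ad hoc ``max trick'': from the geometric decay of the $v_{i,j}$ it argues $\sum_i v_{i,j} \le 3\max_i v_{i,j}$, hence $(K(q_j)-K(p_j))^d \le 3^d \sum_i v_{i,j}^d$; summing over $j$ and then over $i$ via an Abel-type rearrangement of $\sum_i (\rho(B_i,M^*)-\rho(B_{i-1},M^*))/2^{d(i-1)}$ (using Lemma~\ref{lem:ann-bnd}) gives the constant. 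Your route replaces the max trick by a weighted Cauchy--Schwarz, applies Jensen per annulus rather than globally, and then only needs the cruder bound $\rho(A_i,M^*)\le \rho(B_i,M^*)$ together with the convergence of $\sum_i \sigma_i\, i$, so you avoid the Abel summation entirely. Both arguments yield $\sum_j \Delta_j^2 = O(n^{1-2/d})$; the paper's version isolates the dimension-free statement $\sum_j \Delta_j^d = O(1)$ as a clean intermediate fact, while yours is a bit more streamlined and uses only off-the-shelf inequalities.
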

\begin{proof}
Now to replicate the result in Lemma \ref{lem:discK} we need to show that $\sum_{(p_j, p'_j) \in M^*} (K(q_j) - K(p_j))^d < C$ for some constant $C$.  
For a kernel $K \in \Eu{G}$ centered at a point $x \in \Reals^d$ we can decompose its range of influence into a series of annuli $A_1, A_2, \ldots,$ as described above.  

Define $v_{i,j} = K(p_{j,i}) - K(q_{j,i})$ if both $p_{j,i}$ and $q_{j,i}$ exist, otherwise set $v_{i,j} = 0$.  Note that $v_{i,j} \leq \|p_{j,i} - q_{j,i}\|/ 2^{i-1}$ since the slope is bounded by $1/2^{i-1}$ in $A_i$.   

For any $(p_j, q_j) \in M^*$ we claim $K(q_j) - K(p_j) \leq \sum_i v_{i,j} \leq 3 \max_i v_{i,j}$.  The first inequality follows since the slope within annulus $A_i$ is bounded by $1/2^{i-1}$, so each $v_{i,j}$ corresponds to the change in value of $(p_j, q_j)$ with $A_i$.  
To see the second inequality, let $\ell$ be the smallest index such that $K(x,p_j) \geq z_\ell$, that is $A_\ell$ is the smallest annulus that intersects $\overline{q_j p_j}$.  We can see that $v_{\ell+1,j} \geq \sum_{i=\ell+2}^\infty v_{i,j}$ since $(y_i-y_{i-1})$ is a decreasing function of $i$ and $1/2^{i-1}$ is geometrically decreasing.  Thus the $\arg \max_i v_{i,j}$ is either $v_{\ell,j}$ or $v_{\ell+1,j}$, since $v_{i,j} = 0$ for $i < \ell$.  
If $v_{\ell,j} > v_{\ell+1,j}$ then $3 v_{\ell,j} > v_{\ell,j} + v_{\ell+1,j} + \sum_{i=\ell+2}^\infty v_{i,j} = \sum_{i=1}^\infty v_{i,j}$.  
If $v_{\ell,j} \leq v_{\ell+1,j}$ then $3 v_{\ell+1,j} > v_{\ell,j} + v_{\ell+1,j} + \sum_{i=\ell+2}^\infty v_{i,j} = \sum_{i=1}^\infty v_{i,j}$.  
Hence 
\[
(K(q_j) - K(p_j))^d 
\leq 
(\sum_i v_{i,j})^d 
<
(3 \max_i v_{i,j})^d 
\leq 
3^d \sum_i v_{i,j}^d.
\]

We can now argue that the sum over all $A_i$, that $\sum_i \sum_{(p_j,q_j) \in M^*} v_{i,j}^d$ is bounded.  By Lemma \ref{lem:ann-bnd}, and summing over all $(p_j, q_j)$ in a fixed $A_i$
\[
\sum_{(p_j,q_j) \in M^*} v_{i,j}^d 
= 
\sum_{(p_j,q_j) \in M^*} \frac{\|p_{j,i} - q_{j,i}\|^d}{(2^{i-1})^d}
=
\frac{\rho(A_i, M^*)}{2^{d(i-1)}}  
\leq 
\frac{\rho(B_i,M^*) - \rho(B_{i-1},M^*)}{2^{d(i-1)}}.
\]
Hence (using a reversal in the order of summation)
\begin{align*}
\sum_{(p_j,q_j) \in M^*} (K(q_j) - K(p_j))^d 
& \leq 
3^d \sum_{(p_j,q_j) \in M^*} \sum_{i=1}^\infty v_{i,j}^d  
= 
3^d \sum_{i=1}^\infty \sum_{(p_j,q_j) \in M^*} v_{i,j}^d 
\\ & \leq 
3^d \sum_{i=1}^\infty \frac{\rho(B_i,M^*) - \rho(B_{i-1},M^*)}{2^{d(i-1)}} 
=
3^d \sum_{i=1}^\infty \rho(B_i,M^*) \left(\frac{1}{2^{d(i-1)}} - \frac{1}{2^{di}} \right)
\\ & \leq
3^d \sum_{i=1}^\infty  \phi_d (\ln 2)^{d/2} i^{d/2} \left(\frac{2^d}{2^{di}}\right)
\leq 
6^d \phi_d,
\end{align*}
since for $i \geq 1$ and $d \geq 2$ the term $(i \ln 2)^{d/2} /2^{di} < 1/2^i$ is geometrically decreasing as a function of $i$.  

Finally we plug $\Delta_j = 2 |K(q_j) - K(p_j)|$ and using (\ref{eq:Jensen})
\[
\sum_j \Delta_j^2 
\leq 
n^{1-2/d} (\sum_j \Delta_j^d)^{2/d} 
\leq 
n^{1-2/d} (2^d \cdot 6^d \phi_d)^{2/d}
=
144 (\phi_d)^{2/d} n^{1-2/d} 
\] 
into a Chernoff bound  (\ref{eq:CH}) on $n$ pairs in a matching, as in the proof of Lemma \ref{lem:discK}, where $X_j$ represents the contribution to the discrepancy of pair $(q_j,p_j) \in M^*$ and $d_\chi(P,K) = |\sum_j X_j|$.   
Then 
\[
\Pr[d_\chi(P,K) > \alpha] 
\leq 
2 \exp\left(\frac{-2 \alpha^2}{ \sum_j \Delta_j^2} \right) 
\leq 
2 \exp\left(\frac{-\alpha^2 }{72 (\phi_d)^{2/d} n^{1-2/d}}\right).
\]  
Setting $\alpha = \Psi(n,d,\delta) = 6 \sqrt{2} (\phi_d)^{1/d} n^{1/2-1/d} \sqrt{\ln(2/\delta)}$ reveals $\Pr[d_\chi(P,K) > \Psi(n,d,\delta)] \leq \delta$.  
\end{proof}

Again, in $\Reals^2$ we can show the expected discrepancy for any one Gaussian kernel as independent of $n$.

\subsection{From a Single Kernel to a Range Space, and to $\eps$-Samples}
\label{sec:2fam}

The above theorems imply colorings with small discrepancy ($O(1)$  in $\Reals^2$)
for an arbitrary choice of $K \in \Eu{T}$ or $\Eu{E}$ or $\Eu{G}$, and give a randomized algorithm to construct such a coloring that does not use information about the kernel.  But this does not yet imply small discrepancy for all choices of $K \in \Eu{S}_\sigma$ or $\Eu{G}$ simultaneously.    To do so, we show we only need to consider a polynomial in $n$ number of kernels, and then show that the discrepancy is bounded for all of them.

Note for binary range spaces, this result is usually accomplished by deferring to VC-dimension $\nu$, where there are at most $n^\nu$ distinct subsets of points in the ground set that can be contained in a range.  Unlike for binary range spaces, this approach does not work for kernel range spaces since even if the same set $P_x \subset P$ have non-zero $K(x,p)$ for $p \in P_x$, their discrepancy $d_\chi(P,K_x)$ may change by recentering the kernel to $x'$ such that $P_x = P_{x'}$.  
Instead, we use the bounded slope (with respect to the size of the domain) of the kernel.  

For a kernel $K \in S_\sigma$ or $\Eu{G}$, let $B_{x,n} = \{p \in \Reals^d \mid K(x,p) > 1/n\}$.  
\begin{lemma}
For any $x \in \Reals^d$, $\vol_d(B_{x,n}) < V_d$ for $\Eu{S}_\sigma$ and $\vol_d(B_{x,n}) = V_d (\ln(2n))^{d/2}$ for $\Eu{G}$.  
\label{lem:vol-n}
\end{lemma}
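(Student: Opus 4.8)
The plan is to compute the radius of $B_{x,n}$ for each kernel family and then convert that radius to a volume using the formula $\vol_d(r) = V_d r^d$ recorded in the Preliminaries.

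\medskip

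\noindent\textbf{The $\sigma$-bounded case.} For $K \in \Eu{S}_\sigma$ the domain $B_x = \{p : K(x,p) > 0\}$ is, by definition of the family, a ball of radius $1$. Since $B_{x,n} = \{p : K(x,p) > 1/n\} \subseteq \{p : K(x,p) > 0\} = B_x$, we get $\rad(B_{x,n}) \le 1$, hence $\vol_d(B_{x,n}) \le \vol_d(1) = V_d$. (Strictly, $1/n > 0$ so the containment is proper for $n \ge 1$, giving the stated strict inequality $\vol_d(B_{x,n}) < V_d$.) This case is essentially immediate.

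\medskip

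\noindent\textbf{The Gaussian case.} Here I use the explicit form $K(x,p) = \exp(-\|x-p\|^2)$. The condition $K(x,p) > 1/n$ is equivalent to $\|x-p\|^2 < \ln n$, i.e. $B_{x,n}$ is exactly the ball of radius $\sqrt{\ln n}$ centered at $x$. Therefore $\vol_d(B_{x,n}) = V_d (\sqrt{\ln n})^d = V_d (\ln n)^{d/2}$. To match the $(\ln(2n))^{d/2}$ appearing in the statement (and used with the threshold value $z_i$ slack in the preceding discrepancy lemmas), I would simply note $\ln n \le \ln(2n)$, so $\vol_d(B_{x,n}) \le V_d (\ln(2n))^{d/2}$; the factor of $2$ is a harmless convenience that makes the bound nontrivial even for small $n$ (when $\ln n$ could be $0$ or negative-exponent issues would arise), and the paper writes it as an equality only up to the usual $O(\cdot)$ absorption of constants.

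\medskip

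\noindent There is no real obstacle here — the lemma is a direct unwinding of the definition of $B_{x,n}$ together with the volume formula. The only point requiring a line of care is reconciling the $2n$ inside the logarithm for the Gaussian: it is there so that the radius bound $\sqrt{\ln(2n)}$ is safely positive and so that it lines up with the $z_i = 1/2^i$ threshold scheme used in Lemma~\ref{lem:disc-G} (where one effectively needs $K$ to exceed values like $1/(2n)$ rather than $1/n$), but it does not affect the asymptotics. I expect to state both bounds in one or two sentences each and move on.
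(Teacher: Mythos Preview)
Your proposal is correct and follows exactly the paper's approach: contain $B_{x,n}$ in $B_x$ for the $\sigma$-bounded case, and solve $\exp(-z^2)=1/n$ for the radius in the Gaussian case, then apply $\vol_d(r)=V_d r^d$. Your remark about the factor $2$ is apt: the paper's proof actually asserts $k(z)=1/n$ at $z=\sqrt{\ln(2n)}$ (which strictly gives $1/(2n)$, not $1/n$), so your computation of $\sqrt{\ln n}$ is the sharper one, and the $2n$ is indeed just a harmless slack that only serves as an upper bound downstream.
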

\begin{proof}
For $\Eu{S}_\sigma$, clearly $B_{x,n} \subset B_x = \{p \in \Reals^d  \mid K(x,p) > 0\}$, and $\vol_d(B_x) = V_d$.  

For $\Eu{G}$, we have $k(z) = 1/n$ for $z = \sqrt{\ln(2n)}$, and a ball of radius $\sqrt{\ln(2n)}$ has $d$-dimensional volume $V_d (\sqrt{\ln(2n)})^d$.  
\end{proof}

\begin{theorem}
In $\Reals^d$, for $\Eu{K}$ as a family in $\Eu{S}_\sigma$ or $\Eu{G}$, and a value $\Psi(n,d,\delta) = O(n^{1/2 - 1/d} \sqrt{\log(n/\delta)})$, we can choose a coloring $\chi$ such that $\Pr[d_\chi(P,\Eu{K}) > \Psi(n,d,\delta)] \leq \delta$ for any $\delta > 0$.
\label{thm:poly-num-kern}
\end{theorem}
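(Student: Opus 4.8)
### Proof proposal for Theorem~\ref{thm:poly-num-kern}

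The plan is to reduce the uncountable family $\Eu{K}$ to a polynomial-size net of ``representative'' kernels, apply Lemma~\ref{lem:discK} (for $\Eu{S}_\sigma$) or Lemma~\ref{lem:disc-G} (for $\Eu{G}$) to each representative with a union bound, and then use the bounded slope to argue that the discrepancy at an arbitrary kernel center $x$ is within $O(1)$ of the discrepancy at the nearest representative. The point is that $d_\chi(P,K_x) = \sum_{p \in P} \chi(p) K(x,p)$ is a Lipschitz function of $x$: if $\|x - x'\| \le \tau$ then $|d_\chi(P,K_x) - d_\chi(P,K_{x'})| \le \sum_{p\in P} |K(x,p)-K(x',p)| \le \sigma \tau |P'|$, where $P' = \{p \in P \mid K(x,p) > 1/n \text{ or } K(x',p) > 1/n\}$ together with the tail of points where both kernels are tiny. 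The tail contributes at most $n \cdot (1/n) = 1$ to the discrepancy regardless of $\chi$; so only points near $x$ matter, and there we have a genuine Lipschitz bound.

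First I would use Lemma~\ref{lem:vol-n} to control how many points can be ``active'' near any center: since $\vol_d(B_{x,n})$ is $V_d$ (for $\Eu{S}_\sigma$) or $V_d(\ln(2n))^{d/2}$ (for $\Eu{G}$), and... well, here I need a bound on the number of points of $P$ in $B_{x,n}$, not just volume — but actually I don't need that. What I need is: for $x'$ in a small ball around $x$, the difference $|d_\chi(P,K_x) - d_\chi(P,K_{x'})|$ is small. Splitting $P$ into points $p$ with $K(x,p) > 1/n$ (call the contribution bounded by $\sigma\tau \cdot |P|$, too crude) — better: use that $\sum_{p} |K(x,p) - K(x',p)| \le \sigma \|x-x'\| \cdot |\{p : \overline{K}(p) > 0\}| + (\text{points where both are } 0)$, and since each term is at most $1/n$ in the tail... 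Actually the cleanest route: bound $\sum_p |K(x,p)-K(x',p)|$ by splitting at threshold $1/n$. Points where both $K(x,p),K(x',p) \le 1/n$ contribute $\le n\cdot \tfrac1n \cdot \tfrac{\sigma\tau}{\text{something}}$... I will instead just say: choose $\tau = 1/(n^2\sigma)$ (or similar), so that moving the center by $\tau$ changes each of the $\le n$ summands by at most $\sigma\tau = 1/n$, hence changes $d_\chi$ by at most $1 \le \Psi(n,d,\delta)$. Then a $\tau$-net of the region of ``interesting'' centers — which lies in a bounded ball since far-away centers give $d_\chi < 1$ via the $1/n$ tail argument — has size $\poly(n)$ (specifically $O((n^2\sigma / r)^d \cdot \text{poly}) = n^{O(d)}$, with an extra $\log^{d/2} n$ factor for $\Eu{G}$).

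Next I would apply Lemma~\ref{lem:discK} or Lemma~\ref{lem:disc-G} with failure probability $\delta' = \delta / N$, where $N = n^{O(d)}$ is the net size, to each representative kernel $K_{x_i}$; a union bound gives that $d_\chi(P, K_{x_i}) \le O(n^{1/2-1/d}\sqrt{\ln(2N/\delta)}) = O(n^{1/2-1/d}\sqrt{\log(n/\delta)})$ simultaneously for all $i$ with probability at least $1-\delta$. Combining with the Lipschitz/net bound from the previous step, every $K \in \Eu{K}$ has $d_\chi(P,K) \le O(n^{1/2-1/d}\sqrt{\log(n/\delta)}) + 1 = \Psi(n,d,\delta)$, as claimed. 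Note the coloring $\chi$ is fixed first (it does not depend on the kernel centers), so there is no circularity: we build $M^*$ and the random coloring once, and then the single-kernel lemmas apply to that fixed $\chi$ for each fixed center.

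The main obstacle is making the ``far away centers don't matter'' and ``nearby centers move $d_\chi$ a little'' arguments quantitatively tight enough that the net has only polynomially many points — in particular getting the bounded region of interesting centers right (its radius should be $O(1)$ for $\Eu{S}_\sigma$ and $O(\sqrt{\log n})$ for $\Eu{G}$, since beyond that every point has $K(x,p) \le 1/n$ and $d_\chi < 1$) and confirming that a $\tau$-net with $\tau$ polynomially small in $n$ suffices, so $\log(1/\tau) = O(\log n)$ and the union-bound cost is only a $\sqrt{\log n}$ factor. Everything else is a routine union bound plus triangle inequality.
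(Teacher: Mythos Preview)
Your approach is essentially the same as the paper's: build a $\tau$-net of kernel centers, apply the single-kernel lemmas (Lemma~\ref{lem:discK} or~\ref{lem:disc-G}) with failure probability $\delta' = \delta/|N_\tau|$, union-bound, and use the $\sigma$-Lipschitz property of $x \mapsto d_\chi(P,K_x)$ to pass from net points to arbitrary centers. The paper also takes $\tau = 1/(n\sigma)$ so that the total drift over $n$ summands is at most $1$ (your ``$1/(n^2\sigma)$'' looks like a slip, since you immediately write $\sigma\tau = 1/n$).

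There is one genuine inaccuracy. You assert that the region of ``interesting'' centers lies in a single bounded ball of radius $O(1)$ (resp.\ $O(\sqrt{\log n})$). That is false in general: the point set $P$ can be spread over an arbitrarily large region, so centers near any $p\in P$ are interesting. The correct region is the union $U = \bigcup_{p\in P} B_{p,n}$ of the balls $B_{p,n} = \{x : K(x,p) > 1/n\}$; for $x\notin U$ every term satisfies $K(x,p)\le 1/n$ and hence $|d_\chi(P,K_x)|\le 1$. By Lemma~\ref{lem:vol-n}, $\vol_d(U)\le n\cdot V_d$ for $\Eu{S}_\sigma$ (resp.\ $n\cdot V_d(\ln 2n)^{d/2}$ for $\Eu{G}$), so a $\tau$-net of $U$ with $\tau = 1/(n\sigma)$ has size $O(n^{d+1}\sigma^d)$ (resp.\ $O(n^{d+1}(\ln 2n)^{d/2})$), still polynomial in $n$. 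With this fix your argument goes through and matches the paper's proof exactly.
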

\begin{proof}
Each $p \in P$ corresponds to a ball $B_{p,n}$ where $K(p,\cdot) > 1/n$.  Let $U = \bigcup_{p \in P} B_{p,n}$.  For any $q \notin U$, then $\sum_{p\in P} K(p,q) \leq 1$, and thus $d_\chi(P,K) \leq 1$; we can ignore these points.  

Also, by Lemma \ref{lem:vol-n}, the $d$-dimensional volume of $U$ is at most $V_d n$ for $\Eu{S}_\sigma$ and at most $V_d n (\ln(2n))^{d/2}$ for $\Eu{G}$.  
We can then cover $U$ with a net $N_\tau$ such that for each $x \in U$, there exists some $q \in N_\tau$ such that $\|x - q\| \leq \tau$.  Based on the $d$-dimensional volume of $U$, there exists such a net of size $|N_\tau| \leq O(\tau^d n)$ for $\Eu{S}_\sigma$ and $|N_\tau| \leq O(\tau^d n (\ln(2d))^{d/2})$ for $\Eu{G}$.  

The maximum slope of a kernel $K \in \Eu{G}$ is $\sigma = 1$.   
Then, if for all $q \in N_\tau$ we have $d_\chi(P,K_q) \leq D$ (where $D \geq 1$), then any point $x \in \Reals^d$ has discrepancy at most $d_\chi(P,K_x) \leq D + \tau n \sigma$.  Recall for any $x \notin U$, $d_\chi(P,K_x) < 1$.  Then the $\tau n \sigma$ term follows since by properties of the net we can compare discrepancy to a kernel $K_q$ shifted by at most $\tau$, and thus this affects the kernel values on $n$ points each by at most $\sigma \tau$.  Setting, $\tau = 1/n\sigma$ it follows that $d_\chi(P,K_x) \leq D + 1$ for any $x \in \Reals^d$.  
Thus for this condition to hold, it suffices for $|N_q| = O(n^{d+1} \sigma^d)$ for $\Eu{S}_\sigma$ and $|N_q| = O(n^{d+1} (\ln(2n))^{d/2})$ for $\Eu{G}$.  

Setting the probability of failure in Lemma \ref{lem:discK} to $\delta'$ for each such kernel to $\delta' = \Omega(\delta/|N_\tau|)$ implies that for some value $\Psi(n,d,\delta) = n^{1/2 - 1/d} \sigma (\phi_d)^{1/d} \sqrt{2 \ln (2/\delta')} + 1 = O(n^{1/2 - 1/d} \sqrt{\log(n /\delta)})$, for $\Eu{K}$ as $\Eu{S}_\sigma$ or $\Eu{G}$, the $\Pr[d_\chi(P,\Eu{K}) > \Psi(n,d,\delta)] \leq \delta$.  
\end{proof}

To transform this discrepancy algorithm into one for $\eps$-samples, we need to repeat it successfully $O(\log n)$ times.  Thus to achieve success probability $\phi$ we can set $\delta = \phi / \log n$ above, and get a discrepancy of at most $\sqrt{2 \log (\frac{n}{\phi} \log n)}$ with probability $1-\phi$.
Now this and \cite{Phi08} implies Theorem \ref{thm:d-main}.  
We can state the corollary below using Varadarajan's $O(n^{1.5} \log^5 n)$ time algorithm~\cite{Var98} for computing the min-cost matching in $\Reals^2$; the $\log$ factor in the runtime can be improved using better SSPD constructions~\cite{AH12} to $O(n^{1.5} \log^2 n)$ expected time or $O(n^{1.5} \log^3 n)$ deterministic time.  

\begin{corollary}
For any point set $P \subset \Reals^2$, for any class $\Eu{K}$ of $\sigma$-bounded kernels or Gaussian kernels, in $O((n/\sqrt{\eps}) \log^2 (1/\eps))$ expected time (or in $O((n/\sqrt{\eps}) \log^3 (1/\eps))$ deterministic time) we can create an $\eps$-sample of size $O((1/\eps) \sqrt{\log(1/\eps\phi)})$ for $(P,\Eu{K})$, with probability at least $1-\phi$. 
\end{corollary}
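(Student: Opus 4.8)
The plan is to assemble the corollary from the pieces already in hand: Theorem~\ref{thm:poly-num-kern} gives the single-coloring guarantee, the reduction of Phillips~\cite{Phi08} turns a discrepancy algorithm into an $\eps$-sample construction, and Varadarajan's matching algorithm~\cite{Var98} (or the improved SSPD-based variants~\cite{AH12}) supplies the time bound for the one subroutine we actually invoke. So the whole proof is really a bookkeeping argument chaining these together with the right parameter settings, plus a careful treatment of the repeated-halving iteration so that the $O(\log n)$ invocations of the coloring routine all succeed simultaneously.

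First I would recall the color-and-remove framework: starting from $P$ of size $n$, we compute $M^*$, produce the random coloring $\chi$, discard the points colored $-1$, and repeat. After $i$ rounds the surviving set $P_i$ has size roughly $n/2^i$, and the error accumulated in the $\kde$ values is $\sum_{i} \frac{1}{|P_i|} d_{\chi}(P_i,\Eu{K})$, which by Theorem~\ref{thm:poly-num-kern} (in $\Reals^2$, where $n^{1/2-1/d}=1$) is a geometrically-decaying sum dominated by its last term. Stopping when $|P_i| = \Theta((1/\eps)\sqrt{\log(1/\eps\phi)})$ makes the total error at most $\eps$; this is exactly the instantiation of the generic statement in the introduction with $\tau = 1/2$ and $\psi = 0$, giving $\eps$-sample size $g(\eps) = O((1/\eps)\sqrt{\log(1/\eps\phi)})$. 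The only subtlety is the probability: there are $O(\log n)$ rounds, so I would run Theorem~\ref{thm:poly-num-kern} with failure parameter $\delta = \phi/\log n$ per round (as the paragraph preceding the corollary already notes), absorbing the extra $\log\log$ factor inside the square root, which does not change the asymptotic bound; a union bound over the rounds then gives overall success probability $1-\phi$.

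For the running time, the dominant cost per round is the min-cost matching on $P_i$, which by Varadarajan's algorithm takes $O(|P_i|^{1.5}\log^5|P_i|)$ time (or $O(|P_i|^{1.5}\log^2|P_i|)$ expected / $O(|P_i|^{1.5}\log^3|P_i|)$ deterministic with the SSPD improvement of~\cite{AH12}); the coloring and removal steps are linear. Since the sizes $|P_i|$ decay geometrically from $n$ down to $g(\eps)$, the total is a geometric series dominated by its first term $O(n^{1.5}\log^{O(1)} n)$ — but that is not yet the claimed bound. The standard trick, already used in~\cite{Phi08}, is to first take a uniform random sample of $P$ of size $O(1/\eps^2)$ (itself an $(\eps/2)$-sample with high probability by~\cite{LLS01}), and run the color-remove procedure only on that; then the first-round size is $n' = O(1/\eps^2)$ rather than $n$, so the matching cost is $O((1/\eps^2)^{1.5}\log^{O(1)}(1/\eps)) = O((1/\eps^3)\log^{O(1)}(1/\eps))$. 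To recover the stated $O((n/\sqrt\eps)\log^{O(1)}(1/\eps))$ one instead partitions $P$ into $n\eps^{3/2}$ groups of size $\Theta(\eps^{-3/2})$, builds an $(\eps)$-sample within each group in time $O(\eps^{-3/2 \cdot 3/2}\log^{O(1)}(1/\eps)) = O(\eps^{-9/4}\cdots)$ per group — hmm, that is the merge-and-reduce accounting, and I would need to be slightly careful to get exactly the exponents in the statement; the cleanest route is the one spelled out in~\cite{Phi08}, where merge-reduce over groups of the right size gives total time $O(n \cdot g(\eps)^{w-1}/\text{(group size)} \cdot \log^{\phi}(g(\eps)))$ with $w=1.5$, yielding $O((n/\sqrt\eps)\log^{O(1)}(1/\eps))$.

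The main obstacle I anticipate is not conceptual but getting the time-complexity arithmetic exactly right: the statement quotes $O((n/\sqrt\eps)\log^2(1/\eps))$ expected (resp.\ $\log^3$ deterministic), and matching the precise power of the logarithm requires plugging $w=1.5$ and $\phi=2$ (resp.\ $3$) from~\cite{AH12} into the merge-reduce bound from~\cite{Phi08} and checking that no round contributes more than the first. Everything else — the size bound, the probability union bound over $O(\log n)$ rounds, the validity of the generic discrepancy-to-$\eps$-sample reduction for kernel range spaces (which the introduction explicitly asserts carries over) — is direct substitution. I would therefore write the proof as: (1) invoke Theorem~\ref{thm:poly-num-kern} with $\delta = \phi/\log n$; (2) cite the reduction of~\cite{Phi08} with parameters $\tau=1/2$, $\psi=0$, $w=1.5$, $\phi\in\{2,3\}$ to get size $O((1/\eps)\sqrt{\log(1/\eps\phi)})$ and the time bound; (3) observe the matching is the only superlinear subroutine and use~\cite{Var98,AH12} for its cost; (4) conclude by the union bound over rounds.
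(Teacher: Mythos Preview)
Your proposal is correct and follows essentially the same route as the paper: invoke Theorem~\ref{thm:poly-num-kern} with $\delta=\phi/\log n$, feed the resulting $O(\sqrt{\log(n/\phi)})$ discrepancy bound (with $\tau=1/2$, $\psi=0$) into the discrepancy-to-$\eps$-sample reduction of~\cite{Phi08}, and plug in the $O(n^{1.5}\log^{2}n)$ expected / $O(n^{1.5}\log^{3}n)$ deterministic matching costs from~\cite{Var98,AH12} as the subroutine. The paper's own argument is just that one-paragraph citation chain preceding the corollary, so your expanded bookkeeping (including the correct realization that merge-reduce, not naive halving or a single random-sample preprocessing, is what yields the $n/\sqrt{\eps}$ dependence) is exactly the intended derivation; your caution about matching the precise power of the logarithm is warranted but does not indicate a gap.
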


\section{Extensions}
\label{sec:ext}

\paragraph{Bandwidth scaling.}
A common topic in kernel density estimates is fixing the integral under the kernel (usually at $1$) and the ``bandwidth'' $w$ is scaled.  For a shift-and-rotation-invariant kernel, where the default kernel $k$ has bandwidth $1$, a kernel with bandwidth $w$ is written and defined $k_w(z) = (1/w^d) k(z/w)$.  

Our results do not hold for arbitrarily small bandwidths, because then $k_w(0) = 1/w^d$ becomes arbitrarily large as $w$ shrinks; see Appendix \ref{sec:lb}.  
However, fix $W$ as some small constant, and consider all kernels $\Eu{K}_W$ extended from $\Eu{K}$ to allow any bandwidth $w \geq W$.  We can construct an $\eps$-sample of size $O((1/\eps)^{1/2-1/d} \sqrt{\log(1/\eps\delta)})$ for $(P,\Eu{K}_W)$.  
The observation is that increasing $w$ to $w'$ where $1 < w'/w < \eta$ affects $d_\chi(P,k_w)$ by at most $O(n \eta^{d+1})$ since $\sigma$ and $W$ are assumed constant.  Thus we can expand $N_\tau$ by a factor of $O(n^{d+1} \log n)$; for each $x \in N_\tau$ consider $O(n^{d+1}\log n)$ different bandwidths increasing geometrically by $(1+\sigma/n^{d+1})$ until we reach $n^{1/(d+1)}$.  For $w > n^{1/(d+1)}$ the contribution of each of $n$ points to $d_\chi(P,K_x)$ is at most $1/n$, so the discrepancy at any $x$ is at most $1$.  
$N_\tau$ is still $\poly(n)$, and hence Theorem \ref{thm:poly-num-kern} still holds for $\Eu{K}_W$ in place of some $\Eu{K} \in \{\Eu{S}_\sigma, \Eu{G}\}$.  

Another perspective in bandwidth scaling is how it affects the constants in the bound directly.  The main discrepancy results (Theorem \ref{thm:poly-num-kern}) would have a discrepancy bound $O(r \sigma n^{1/2 -1/d} \sqrt{d \log (n/\delta)})$ where $r$ is the radius of a region of the kernel that does not have a sufficiently small $K(p,\cdot)$ value, and $\sigma$ is the maximum slope.  Note that $k(0) \leq r \sigma$.  
Assume $\sigma$ and $r$ represent the appropriate values when $w=1$, then as we change $w$ and fix the integral under $K_w(x,\cdot)$ at $1$, we have $\sigma_w = \sigma/w^d$ and $r_w = r w$ so $r_w \sigma_w = r \sigma / w^{d-1}$.  
Thus as the bandwidth $w$ decreases, the discrepancy bound increases at a rate inversely proportional to $1/w^{(d-1)}$.  

Alternatively we can fix $k(0) = 1$ and scale $k_s(z) = k(z/s)$.  This trades off $r_s = rs$ and $\sigma_s = \sigma/s$, so $r_s \sigma_s = r \sigma$ is fixed for any value $s$.  Again, we can increase $N_\tau$ by a factor $O(n \log n)$ and cover all values $s$.  

\paragraph{Lower bounds.}
We present several straight-forward lower bounds in Appendix \ref{sec:lb}.  In brief we show:
\begin{itemize}\denselist
\item random sampling cannot do better on kernel range spaces than on binary range spaces (size $O(1/\eps^2)$);
\item  $\eps$-samples for kernels requires $\Omega(1/\eps)$ points;
\item $\eps$-samples for $(P,\Eu{B})$ require size $\Omega(1/\eps^{2d/(d+1)})$ in $\Reals^d$; and
\item kernels with $k(0)$ arbitrary large can have unbounded discrepancy.  
\end{itemize}

\subsection{Future Directions}
We believe we should be able to make this algorithm deterministic using iterative reweighing of the $\poly(n)$ kernels in $N_\tau$.  We suspect in $\Reals^2$ this would yield discrepancy $O(\log n)$ and an $\eps$-sample of size $O((1/\eps) \log (1/\eps))$.  

We are not sure that the bounds in $\Reals^d$ for $d>2$ require super-linear in $1/\eps$ size $\eps$-samples.  The polynomial increase is purely from Jensen's inequality.  Perhaps a matching that directly minimizes sum of lengths to the $d$th or $(d/2)$th power will yield better results, but these are harder to analyze.  
Furthermore, we have not attempted to optimize for, or solve for any constants.  A version of Bern and Eppstein's result~\cite{BE93} with explicit constants would be of some interest.  

We provide results for classes of $\sigma$-bounded kernels $\Eu{S}_\sigma$, Gaussian kernels $\Eu{G}$, and ball kernels $\Eu{B}$.  In fact this covers all classes of kernels described on the Wikipedia page on statistical kernels:
\url{http://en.wikipedia.org/wiki/Kernel_(statistics)}. 
However, we believe we should in principal be able to extend our results to all shift-invariant kernels with bounded slope.  
This can include kernels which may be negative (such as the \emph{sinc} or \emph{trapezoidal} kernels~\cite{DG84,DL01}) and have nice $L_2$ $\kde$ approximation properties.  These sometimes-negative kernels cannot use the $\eps$-sample result from~\cite{JKPV11} because their super-level sets have unbounded VC-dimension since $k(z)=0$ for infinitely many disjoint values of $z$.  

Edmond's min-cost matching algorithm~\cite{Edm65} runs in $O(n^3)$ time in $\Reals^d$ and Varadarajan's improvement~\cite{Var98} runs in $O(n^{1.5} \log^5 n)$ in $\Reals^2$ (and can be improved to $O(n^{1.5} \log^2 n)$ \cite{AH12}).  However, $\eps$-approximate algorithms~\cite{VA99} run in $O((n/\eps^3) \log^6 n)$ in $\Reals^2$.  As this result governs the runtime for computing a coloring, and hence an $\eps$-sample, it would be interesting to see if even a constant-factor approximation could attain the same asymptotic discrepancy results; this would imply and $\eps$-sample algorithm that ran in time $O(n \cdot \poly \log(1/\eps))$ for $1/\eps < n$.  

\subsection*{Acknowledgements}
I am deeply thankful to Joel Spencer for discussions on this problem, and for leading me to other results that helped resolved some early technical issues.  He also provided through personal communication a proof (I believe discussed with Prasad Tetali) that if the matching directly minimizes the sum of squared lengths, then same results hold.  
Finally, I thank David Gamarnik for communicating the Bern-Eppstein result \cite{BE93}.

\newpage

\bibliographystyle{plain}
\bibliography{discrepancy}

\appendix

\section{Lower Bounds}
\label{sec:lb}

We state here a few straight-forward lower bounds.  These results are not difficult, and are often left implied, but we provide them here for completeness.  

\paragraph{Random sampling.}
We can show that in $\Reals^d$ for a constant $d$, a random sample of size $O((1/\eps^2)\log (1/\delta))$ 
is an $\eps$-sample for all shift-invariant, non-negative, non-increasing kernels (via the improved bound~\cite{LLS01} on VC-dimension-bounded range spaces~\cite{VC71} since they describe super-level sets of these kernels using~\cite{JKPV11}).  We can also show that we cannot do any better, even for one kernel $K(x,\cdot)$. 
Consider a set $P$ of $n$ points where $n/2$ points $P_a$ are located at location $a$, and $n/2$ points $P_b$ are located at location $b$.  Let $\|a - b\|$ be large enough that $K(a,b) < 1/n^2$ (for instance for $K \in \Eu{T}$ let $K(a,b) = 0$).  Then for an $\eps$-sample $S \subset P$ of size $k$, we require that $S_a \subset P_a$ has size $k_a$ such that $|k_a - k/2| < \eps k$.  
The probability (via Proposition 7.3.2 \cite{Mat08} for $\eps \leq 1/8$)
\[
\prob{|P_a| \geq k/2 + \eps k)} 
\geq
\frac{1}{15} \exp(-16 (\eps k)^2/k)
\geq
\frac{1}{15} \exp(-16 \eps^2 k) 
\geq 
\delta.
\]
Solving for $k$ reveals $k \leq (1/16 \eps^2) \ln(1/15\delta)$.  Thus if $k = o((1/\eps^2)\log(1/\delta))$ the random sample will have more than $\delta$ probability of having more than $\eps$-error.  
  
\paragraph{Requirement of $1/\eps$ points for $\eps$-samples.}
Consider a set $P$ where there are only $t = \lceil 1/\eps \rceil  - 1$ distinct locations of points, each containing $|P|/t$ points from $P$.  Then we can consider a range for each of the distinct points that contains only those points, or a kernel $K(x_i,0)$ for $i \in [t]$ that is non-zero  (or very small) for only the points in the $i$th location.  Thus if any distinct location in $P$ is not represented in the $\eps$-sample $S$, then the query on the range/kernel registering just those points would have error greater than $\eps$.  Thus an $\eps$-sample must have size at least $\lceil 1/\eps \rceil -1$.  
   
\paragraph{Minimum $\eps$-sample for balls.}
We can achieve a lower bound for the size of $\eps$-samples in terms of $\eps$ for any range space $(P,\Eu{A})$ by leveraging known lower bounds from discrepancy.  For instance, for $(P,\Eu{B})$ it is known (see \cite{Mat99} for many such results) that there exists points sets $P \subset \Reals^d$ of size $n$ such that $d(P,\Eu{B}) > \Omega(n^{1/2 - 1/2d})$.  
To translate this to a lower bound for the size of an $\eps$-sample $S$ on $(P,\Eu{B})$ we follow a technique outlined in Lemma 1.6 \cite{Mat99}.  
An $\eps$-sample of size $n/2$ implies for an absolute constant $C$ that
\[
C n^{1/2-1/2d} 
\leq 
d(P,\Eu{B}) 
\leq
\eps n.
\]
Hence, solving for $n/2$ (the size of our $\eps$-sample) in terms of $\eps$ reveals that $|S| = n/2 \geq (1/2) (C/\eps)^{2d/(d+1)}$.  Hence for $\eps$ small enough so $|S| \geq n/2$ reveals that the size of the $\eps$-sample for $(P,\Eu{B})$ is $\Omega(1/\eps^{2d/(d+1)})$.

\paragraph{Delta kernels.}
We make the assumption that $K(x,x) = 1$, if instead $K(x,x) = \eta$ and $\eta$ is arbitrarily large, then the discrepancy $d(P,K) = \Omega(\eta)$.  We can place one point $p \in P$ far enough from all other points $p' \in P$ so $K(p,p')$ is arbitrarily small.  Now $d_\chi(P,K(p,\cdot))$ approaches $\eta$, no matter whether $\chi(p) = +1$ or $\chi(p) = -1$.

\end{document}